\newtheorem{pro}{Proposition}
\newtheorem{cor}{Corollary}
\begin{document}
	
\title{Massive Unsourced Random Access: Exploiting Angular Domain Sparsity}

\author{Xinyu~Xie,~%\IEEEmembership{Student~Member,~IEEE,}
	Yongpeng~Wu,~\IEEEmembership{Senior~Member,~IEEE,}
	Jianping~An,~\IEEEmembership{Member,~IEEE,}
	Junyuan~Gao,~%\IEEEmembership{Student~Member,~IEEE,}
	Wenjun~Zhang,~\IEEEmembership{Fellow,~IEEE,}
	Chengwen~Xing,~\IEEEmembership{Member,~IEEE,}
	Kai-Kit~Wong,~\IEEEmembership{Fellow,~IEEE,}
	and~Chengshan~Xiao,~\IEEEmembership{Fellow,~IEEE}
	\thanks{X. Xie, Y. Wu, J. Gao, and W. Zhang are with the Department of Electronic Engineering, Shanghai Jiao Tong University, Minhang 200240, China (e-mail: \{xinyuxie, yongpeng.wu, sunflower0515, zhangwenjun\}@sjtu.edu.cn).}
	\thanks{C. Xing and J. An are with the School of Information and Electronics, Beijing Institute of Technology, Beijing 100081, China (e-mail: xingchengwen@gmail.com, an@bit.edu.cn).}
	\thanks{K.-K. Wong is with the Department of Electronic and Electrical Engineering, University College London, London WC1E 6BT, U.K. (e-mail: kai-kit.wong@ucl.ac.uk).}
	\thanks{C. Xiao is with the Department of Electrical, and Computer Engineering, Lehigh University, Bethlehem, PA 18015 USA (e-mail: xiaoc@lehigh.edu).}
	\thanks{Corresponding authors: Y. Wu and J. An.}}

\maketitle

\begin{abstract}

This paper investigates the unsourced random access (URA) scheme to accommodate numerous machine-type users communicating to a base station equipped with multiple antennas. Existing works adopt a slotted transmission strategy to reduce system complexity; they operate under the framework of coupled compressed sensing (CCS) which concatenates an outer tree code to an inner compressed sensing code for slot-wise message stitching. We suggest that by exploiting the MIMO channel information in the angular domain, redundancies required by the tree encoder/decoder in CCS can be removed to improve spectral efficiency, thereby an uncoupled transmission protocol is devised. To perform activity detection and channel estimation, we propose an expectation-maximization-aided generalized approximate message passing algorithm with a Markov random field support structure, which captures the inherent clustered sparsity structure of the angular domain channel. Then, message reconstruction in the form of a clustering decoder is performed by recognizing slot-distributed channels of each active user based on similarity. We put forward the slot-balanced \mbox{$ K $-means} algorithm as the kernel of the clustering decoder, resolving constraints and collisions specific to the application scene. Extensive simulations reveal that the proposed scheme achieves a better error performance at high spectral efficiency compared to the CCS-based URA schemes.

\end{abstract}

\begin{IEEEkeywords}
Activity detection, channel estimation, compressed sensing, massive machine-type communications, random access.
\end{IEEEkeywords}

\section{Introduction}

The next generation of cellular technology aims at wirelessly interconnecting sensors, machines, and wearable biomedical devices for potential new applications, thereby forming the architecture of the Internet of Things (IoT). Massive machine-type communications (mMTC) \cite{DSG17}, also known as massive access \cite{WGZ20,CNY21}, is a key requirement for IoT. Different from human-type communications (HTC), generic mMTC scenarios seek to establish reliable communications for a burgeoning number of devices with sporadic traffic patterns and small data payloads. Hence, it calls for novel theories and paradigms for the design of efficient multiple-access schemes.

Applying conventional grant-based random access (RA) schemes \cite{HHN13} to mMTC systems will reveal much energy consumption and high latency. Thus, grant-free RA protocols \cite{SL18} recently attract significant attention, where users directly send data to the base station (BS) without waiting for any approval. A typical type of grant-free RA scheme is based on the allocation of pilot sequences, where unique pilots as user identities are used for activity detection (AD) and channel estimation (CE) in the first stage \cite{LY18}. Data transmission is executed in the next stage using efficient RA schemes like sparse code multiple access (SCMA) \cite{LWS20}. As a prospective grant-free scheme for mMTC, a novel modality of \emph{unsourced random access} (URA) is introduced in \cite{P07}. Different from the pilot-based RA scheme, the URA users compulsorily utilize the same codebook to transmit messages directly without revealing their identities. Therefore, the BS only needs to acquire a list of transmitted messages without associating them to specific active users. Relying on the infinite block-length assumption, traditional asymptotic information theory provides limited perspectives to analyze the capacity of URA channels that propagate small user payloads. Therefore, in \cite{P07}, the author derives a finite block-length (FBL) achievability bound attained by random coding and maximum-likelihood (ML) decoding. Conventional RA schemes like ALOHA and treating inference as noise (TIN) are shown to remain an important gap to the FBL benchmark, thereby arousing great interest in introducing more efficient schemes.

An intuitive URA scheme can be easily obtained where a unique signature (codeword) is allotted to each message for transmission, and the BS performs AD to the set of codewords. Although it is evident that RA as an AD problem is closely related to the compressed sensing (CS) recovery problem \cite{CSY18,LY18}, directly applying CS techniques are prohibited in practical situations because the codebook size grows exponentially to the user payloads (e.g., to transmit $ 100 $ bits, one must assign $ 2^{100} $ signatures). Many practical URA coding schemes, e.g. \cite{CT20,RO21,FJC21}, have been studied on the additive white Gaussian noise (AWGN) channel to approach the FBL bound. They follow a recently proposed concatenated coding scheme termed coded compressed sensing (CCS) \cite{ACN20'1}, which couples an outer tree code and an inner CS code. More specifically, the entire message is partitioned into several smaller fragments, coupled by appending parity check bits generated from a linear block code. Each fragment is encoded by one column of a common coding matrix. The decoder first reconstructs transmitted fragments in all transmission slots, then relies on a tree-based decoding process to stitch these fragments together. An enhanced decoding strategy is reported in \cite{ACN20'2,ADP20}, where message stitching is executed right after the inner decoder recovers active fragments in each transmission slot. Existing fragment combinations impose restrictions on potential parity patterns, which helps narrow down the search realm for the CS algorithm in the next AD stage, leading to a systematic improvement in detection and decoding error probabilities. The works of \cite{FHJ19,FHJ21} further extend the CCS-based URA model to the Rayleigh block-fading AWGN channel in a MIMO setting, where a covariance-based support estimation method \cite{HJC18} is investigated for AD. Such a non-Bayesian method outperforms the approximate message passing (AMP) based Bayesian approach \cite{CSY18,LY18} in terms of AD error probability since it well exploits the channel hardening effect. However, due to redundancies required by the tree encoder/decoder for message stitching, the coding rate and spectral efficiency of CCS-based URA schemes are decreased. Other transmission schemes for MIMO URA can be found in \cite{FJC21'2,DLG20,DLG21}. A pilot-aided URA scheme is proposed in \cite{FJC21'2} based on pilot transmission with subsequent CE and maximum-ratio-combining (MRC). Such a protocol appears to be similar to the conventional two-stage design of pilot-based RA, while the difference is that pilot sequences in \cite{FJC21'2} are chosen pseudo-randomly from a common pilot pool based on the first few bits of active users' message.\footnote{A SCMA based URA scheme can be similarly designed, where the pilot for joint AD and CE in the first stage and the SCMA coding matrix for data transmission in the next stage are both chosen from a common pool based on the first few data bits. However, it is difficult to directly apply SCMA to the CCS scheme since it requires carefully designed pilot sequences to remove the scaling and permutation ambiguities in the blind detection process known as a dictionary learning problem.} Tensor-based modulation (TBM) is introduced to URA in \cite{DLG20,DLG21}, where data decoding is based on tensor decomposition and single-user demapping.

Aiming at decoupling the CCS structure, the authors in \cite{SBM21} suggest that the strong-correlation between slot-wise MIMO channels belonging to each active user enables the message recombination across slots. Specifically, after AD and CE, the determined active fragments are regrouped to the original packets by a clustering decoder capturing the similarity of their corresponding channels. Since the entire transmission frame is dedicated to data communication without redundancies, this uncoupled compressed sensing (UCS) scheme manifests high spectral efficiency. However, the correlation-aware clustering process counts on fractional parameters drawn from the well-estimated channels, while arguments like large-scale fading coefficients (LSFCs) are dropped. Also, lacking a collision resolution mechanism, one must apply a relatively large-sized codebook to reduce the probability of codeword collision (i.e., two or more users choose to send the same codeword at the same slot), which results in a huge computational burden. %We would also like to mention that although the covariance-based estimation method in \cite{FHJ19,FHJ21} manifests a better scaling law in terms of measurements versus number of active users than any AMP-based algorithm, it can not be directly adopted to UCS since only LSFCs of user channels are recovered for AD.

Massive MIMO technology, which utilizes a large number of antennas at the BS, provides high spatial resolution within the same time/frequency resource to support more active devices. To fully exploit rich spatial statistics reserved in the large-scale antenna space, we appeal to the \emph{angular domain channel} when modifying the UCS transmission scheme. The sparse nature of the angular domain channel \cite{GDW15, KGW20} promotes the sparsity of the CS paradigm, so less number of measurements are required to achieve the same level of estimation accuracy. Moreover, provided that angle of arrival (AoA) intervals of conflicting users are non-overlapping, codeword collision can be resolved \cite{YGX15}. We summarize the main contributions of the proposed uncoupled URA transmission scheme as follows.
\begin{itemize}[leftmargin=*]
	\item
	\textbf{A novel CS algorithm for AD and CE considering correlated angular domain channels:} We obey the generalized approximate message passing (GAMP) \cite{R11} framework for sparse signal reconstruction, where a Markov random field (MRF) \cite{SS11} structure is introduced to capture the inherent clustered sparsity of angular domain channels. We further provide an expectation-maximization (EM) way to learn crucial channel parameters dynamically. The proposed algorithm named \emph{EM-MRF-GAMP} achieves better CE accuracy compared to state-of-the-art CS techniques.
	\item
	\textbf{Clustering-based message recombination design tailored for angular domain channels:} We rely on unique angular transmission features reserved in the recovered channels to stitch the slot-distributed sequences together in a clustering way, thereby eliminating the tree-based encoding/decoding processes involved in CCS. The proposed \emph{slot-balanced $ K $-means} algorithm as the kernel of the clustering decoder enforces two constraints specific to the application scene. An adjustment is further made to alleviate the influence of codeword collision.
	\item
	\textbf{Uncoupled transmission design for URA with high spectral efficiency:} We leverage distinctive MIMO channel information rather than parity check bits to concatenate segmented data, which decouples the CCS scheme and achieves a higher coding rate. Compared to CCS-based URA regimes, the proposed uncoupled transmission scheme exhibits advantages with respect to decoding error probability in a high spectral efficiency region.
\end{itemize}

We organize the rest of this paper as follows. We describe the virtual angular domain channel model and the URA system model in the next section. In Section III, we overview the encoding and decoding processes of the UCS scheme exploiting angular domain sparsity. In Section IV, the EM-MRF-GAMP algorithm is put forward as the CS decoder. In Section V, we introduce the slot-balanced $ K $-means algorithm as the kernel of the clustering decoder. Numerical results of the system performance are presented in Section VI, followed by concluding remarks drawn in Section VII.

\emph{Notations:} Throughout this paper, the $ j $-th column and $ i $-th row of matrix $ \mathbf{X} $ are represented by $ \mathbf{x}_{j} $ and $ \mathbf{x}_{i,:} $, respectively, and the $ (i, j) $-th entry of $ \mathbf{X} $ is expressed by $ x_{i j} $. $ \mathbf{I}_{M} $ denotes the $ M $-dimensional identity matrix. We signify the conjugate, transpose, and conjugate transpose by superscripts $ (\cdot)^*,(\cdot)^T $, and $ (\cdot)^H $, respectively. Given any complex variable or matrix, $ \Re \{ \cdot \} $ and $ \Im \{ \cdot \} $ return its real and imaginary part, respectively. We denote the Euclid norm of vector $ \mathbf{x} $ by $ \left\| \mathbf{x} \right\|  $; $ | \cdot | $, $ \left\| \cdot \right\|_{2} $, and $ \left\| \cdot \right\|_{F} $ stand for the absolute value, the $ \ell_{2} $-norm, and the Frobenius norm, respectively. $ |\mathcal{X}| $ calculates the number of elements in set $ \mathcal{X} $, and $ \mathcal{X} \setminus \mathcal{Y} $ represents the set $ \{z: z \in \mathcal{X},z \notin \mathcal{Y} \} $. For an integer $ X > 0 $, we use the shorthand notation $ [X] $ to represent the set $ \{ 1, 2, \dots, X\} $. $ \mathcal{N}( x ; \widehat{x}, \mu^{x}) $ denotes the Gaussian distribution of a random variable $ x $ with mean $ \widehat{x} $ and variance $ \mu^{x} $, and $ \mathcal{CN}( x ; \widehat{x}, \mu^{x}) $ represents the case of the complex Gaussian distribution.

\section{System Model}

\subsection{Sparse 3D-MIMO Channel Modeling}	

Consider a single-cell network system where many single-antenna users communicate to a BS through the uplink synchronizing scheme. The BS is equipped with a uniform planar array (UPA) of $ M = M_{\mathrm{v}} \times M_{\mathrm{h}} $ antennas, arranging $ M_{\mathrm{v}} $ antennas in the vertical direction and $ M_{\mathrm{h}} $ antennas in the horizontal direction. The channel matrix $ \widetilde{\mathbf{H}}_k \in \mathbb{C}^{M_{\mathrm{v}} \times M_{\mathrm{h}}} $ of the $ k $-th user corresponding to the UPA can be modeled as the sum of $ L $ propagation paths, i.e.,
\begin{align}
	\widetilde{\mathbf{H}}_{k} = \sum_{l=1}^{L} g_{k, l} \mathbf{e}_{\mathrm{v}} \left( \Omega_{k, l}^{\mathrm{v}} \right) \mathbf{e}^{T}_{\mathrm{h}} \left( \Omega_{k, l}^{\mathrm{h}} \right)
\end{align}
where $ g_{k, l} $ is the path gain of the $ l $-th path between the BS and the $ k $-th user. Moreover, the vertical steering vector $ \mathbf{e}_{\mathrm{v}} $ and the horizontal steering vector $ \mathbf{e}_{\mathrm{h}} $ are in turn given by
\begin{align}
	\! \mathbf{e}_{\mathrm{v}} \left( \Omega_{k, l}^{\mathrm{v}} \right) &= \frac{1}{\sqrt{M_{\mathrm{v}}}} \left[ 1, e^{-j 2 \pi \Omega_{k, l}^{\mathrm{v}}}, \dots, e^{-j 2 \pi ( M_{\mathrm{v}} - 1 ) \Omega_{k, l}^{\mathrm{v}}} \right]^{T} \\
	\! \mathbf{e}_{\mathrm{h}} \left( \Omega_{k, l}^{\mathrm{h}} \right) &= \frac{1}{\sqrt{M_{\mathrm{h}}}} \left[ 1, e^{-j 2 \pi \Omega_{k, l}^{\mathrm{h}}}, \dots, e^{-j 2 \pi ( M_{\mathrm{h}} - 1 ) \Omega_{k, l}^{\mathrm{h}}}\right]^{T}
\end{align}
where $ \Omega_{k, l}^{\mathrm{v}} = \Delta \cos ( \phi_{k, l} ) $, $ \Omega_{k, l}^{\mathrm{h}} = \Delta \sin ( \phi_{k, l} ) \cos ( \varphi_{k, l} ) $, $ \phi_{k, l} \in [-\pi / 2, \pi / 2] $ and $ \varphi_{k, l} \in [-\pi / 2, \pi / 2] $ are the elevation AoA and the horizontal AoA, respectively, and $ \Delta $ stands for the ratio of the distance between two adjacent antenna elements to the carrier wavelength. We consider a typical half-wavelength spaced antenna array in this paper, i.e., $ \Delta = 1 / 2 $.

The channel $ \widetilde{\mathbf{H}}_{k} $ can be transformed to the angular domain by
\begin{align}
	\mathbf{H}_{k} =  \sum_{l=1}^{L} g_{k, l} \left[ \mathbf{U}_{\mathrm{v}}^{H} \mathbf{e}_{\mathrm{v}} \left( \Omega_{k, l}^{\mathrm{v}} \right) \right] \left[ \mathbf{U}_{\mathrm{h}}^{H} \mathbf{e}_{\mathrm{h}} \left( \Omega_{k, l}^{\mathrm{h}} \right) \right]^{T} = \mathbf{U}_{\mathrm{v}}^{H} \widetilde{\mathbf{H}}_{k} \mathbf{U}_{\mathrm{h}}^{\ast} \label{DFTtrans}
\end{align}
where 
\begin{align}
	\mathbf{U}_{\mathrm{v}} &= \left[ \mathbf{e}_{\mathrm{v}} ( 0 ), \mathbf{e}_{\mathrm{v}} \left( \tfrac{1}{M_{\mathrm{v}}} \right), \dots, \mathbf{e}_{\mathrm{v}} \left( \tfrac{M_{\mathrm{v}}-1}{M_{\mathrm{v}}} \right) \right] \\
	\mathbf{U}_{\mathrm{h}} &= \left[ \mathbf{e}_{\mathrm{h}} ( 0 ), \mathbf{e}_{\mathrm{h}} \left( \tfrac{1}{M_{\mathrm{h}}} \right), \dots, \mathbf{e}_{\mathrm{h}} \left( \tfrac{M_{\mathrm{h}}-1}{M_{\mathrm{h}}} \right) \right]
\end{align}
are discrete Fourier transform (DFT) matrices whose columns can be regarded as receive beamforming vectors that decompose the total transmit signal into multi-beams along fixed directions. Each entry of the angular domain channel $ \mathbf{H}_{k} $ counts the aggregated energy along the associated receive beam. For convenience, we write $ \widetilde{\mathbf{H}}_{k} $ and $ \mathbf{H}_{k} $ in the $ M $-dimensional vector form as
\begin{align}
	\widetilde{\mathbf{h}}_{k} = \sum_{l = 1}^{L} g_{k, l} \mathbf{e} \left( \Omega_{k, l}^{\mathrm{h}} \right) \otimes \mathbf{e} \left( \Omega_{k, l}^{\mathrm{v}} \right), \ \mathbf{h}_{k} = \mathbf{U}^{H} \widetilde{\mathbf{h}}_{k}
\end{align}
where $ \otimes $ denotes the Kronecker product and $ \mathbf{U} = \mathbf{U}_{\mathrm{h}} \otimes \mathbf{U}_{\mathrm{v}} $ is a unitary matrix.

The angular domain representation $ \mathbf{H}_{k} $ is actually sparse since: 1) the BS is surrounded with few scatterers in the propagation environment \cite{GDW15,KGW20}; 2) the $ ( m_{\mathrm{v}}, m_{\mathrm{h}} ) $-th entry of $ \mathbf{H}_{k} $ has a significant magnitude only if there is a scatterer with mean elevation/horizontal AoA satisfying \eqref{v1} and \eqref{v2} at the same time (see Appendix A for explanation). Against finite number of propagation paths, the sparsity of the angular domain channel is further promoted with the growing number of receiving antennas. Moreover, due to angular spread of the scatterer, the dominant elements of $ \mathbf{H}_{k} $ often appear in clusters in both vertical and horizontal dimensions. Such a two-dimensional clustered sparsity structure of $ \mathbf{H}_{k} $ is illustrated in \mbox{Fig. \ref{channel}}.

\begin{figure}[h]
	\centering
	\includegraphics[width=7cm]{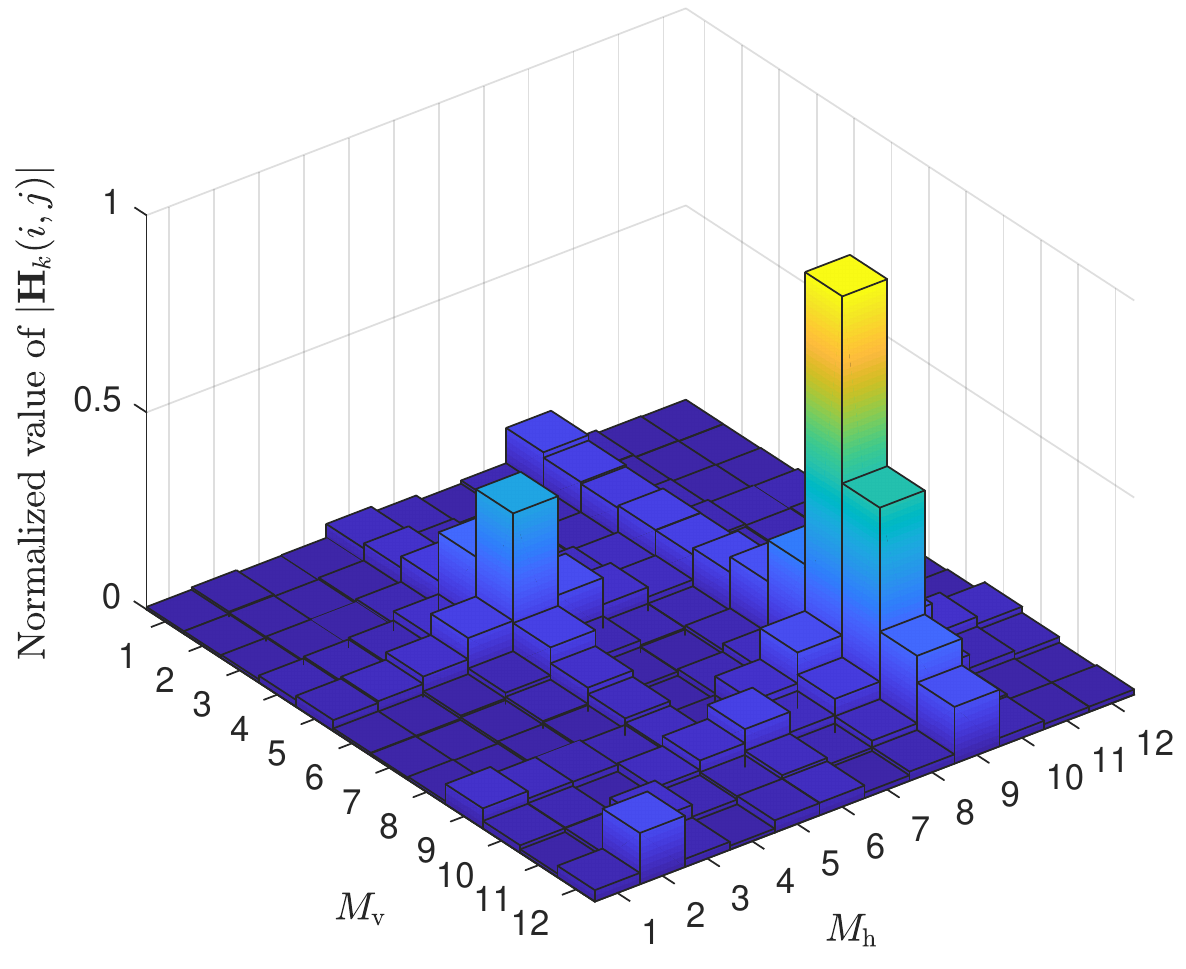}
	\caption{An example of the angular domain channel sparsity with an $ 12 \times 12 $ UPA. The channel is generated from a virtual 3D wireless channel model elaborated in Section VI. The maximum value of $ | \mathbf{H}_{k}( i,j ) | $ is normalized to 1.}
	\label{channel}
\end{figure}

\subsection{Signal Model}

The sporadic traffic pattern of mMTC indicates that only a small set of users $ \mathcal{K}_{\mathrm{a}} $ among a total number of $ K_{\mathrm{tot}} $ users are active. According to the URA setups, to communicate $ J $ bits of information to the BS, these active users pick up codewords $ \{ \widetilde{\mathbf{a}}_{i_{k}} \in \mathbb{C}^{N} : i_k \in [2^{J}], k \in \mathcal{K}_{\mathrm{a}} \} $ from a common codebook $ \widetilde{\mathbf{A}} = [ \widetilde{\mathbf{a}}_{1}, \dots, \widetilde{\mathbf{a}}_{2^{J}} ] \in \mathbb{C}^{N \times 2^{J}} $ to transmit. We set $ \widetilde{\mathbf{A}} \in \mathbb{C}^{N \times 2^{J}} $ a Gaussian independent and identically distributed (i.i.d.) matrix with each element $ a_{n j} \sim \mathcal{CN}( 0, 1 / N ) $, such that $ \mathbb{E} \{ \left\| \widetilde{\mathbf{a}} \right\|^{2} \} = 1 $. If we assume a block-fading channel where channel coefficients remain constant within the coherent block of $ N $ symbol transmissions, the received signal at each transmission slot takes on the form
\begin{align}
	\overline{\mathbf{Y}} = \sum_{k \in \mathcal{K}_{a}} \widetilde{\mathbf{a}}_{i_{k}} \widetilde{\mathbf{h}}_{k}^{T} + \overline{\mathbf{W}} =  \widetilde{\mathbf{A}} \mathbf{\Xi} \widetilde{\mathbf{H}} + \overline{\mathbf{W}}
\end{align}
where $ \mathbf{\Xi} \in \{ 0,1 \} ^{2^{J} \times K_{\mathrm{tot}}} $ is a codeword selection matrix with exactly one nonzero value at the $ i_k $-th entry of the $ k $-th column for $ k \in \mathcal{K}_{\text{a}} $, $ \widetilde{\mathbf{H}} = [ \widetilde{\mathbf{h}}_1, \dots, \widetilde{\mathbf{h}}_{K_{\mathrm{tot}}} ]^{T} \in \mathbb{C}^{K_{\mathrm{tot}} \times M} $, and $ \overline{\mathbf{W}} \in \mathbb{C}^{N \times M} $ is the matrix of additive white Gaussian noise with elements generated from an i.i.d. complex Gaussian distribution $ \mathcal{CN} \left( 0, 2 \sigma^{2} \right) $. The equivalent received signal in the angular domain can be expressed as
\begin{align}
	\widetilde{\mathbf{Y}} =  \widetilde{\mathbf{A}} \mathbf{\Xi} \widetilde{\mathbf{H}} \mathbf{U}^{*}  + \overline{\mathbf{W}} \mathbf{U}^{*} =  \widetilde{\mathbf{A}} \mathbf{\Xi} \mathbf{H} + \widetilde{\mathbf{W}} = \widetilde{\mathbf{A}} \widetilde{\mathbf{X}} + \widetilde{\mathbf{W}} \label{sigmod}
\end{align}
where $ \mathbf{H} = [\mathbf{h}_1, \dots, \mathbf{h}_{K_{\mathrm{tot}}}]^{T} $, $ \widetilde{\mathbf{W}} = \overline{\mathbf{W}} \mathbf{U}^{*} $ is the equivalent noise sample matrix, and $ \widetilde{\mathbf{X}} \triangleq \mathbf{\Xi} \mathbf{H} \in \mathbb{C}^{2^{J} \times M} $.

\section{Slotted Transmission Scheme for Unsourced Random Access}

Transmission protocol design for URA faces the bottleneck that if one wishes to send the entire message of length $ B $ (on the order of $ 100 $) within a single transmission slot, decoding will entail finding the support of $ 2^{B} $ possible codewords, which is computationally intractable. The recent introduction of CCS \cite{ACN20'1}, demonstrated in \mbox{Fig. \ref{treecode}}, takes a divide-and-conquer strategy to alleviate the system complexity. It utilizes a concatenated coding scheme coupling an outer tree code and an inner CS code. Each user payload of size $ B $ is transmitted using $ S' $ fragments of amenable length $ J $. Within each fragment, parity check bits are added after partitioned information bits in the form of an outer tree code; they are generated by pseudo-random linear combinations of information bits from previous fragments. Then, it is the task of the CS encoder to map each fragment (denoted by $ \mathbf{v} \in \{0,1\}^{J} $) to a codeword in the common codebook to emit over the noisy channel. The encoding process can be portrayed as the product of a common coding matrix $ \widetilde{\mathbf{A}} \in \mathbb{C}^{N \times 2^{J}} $ and an index vector $ \bm{\xi} \in \{0,1\}^{2^{J}} $. Such a vector associated with $ \mathbf{v} $ contains all zeros except one non-zero element at location $ \operatorname{decimal}(\mathbf{v}) $, where $ \operatorname{decimal}(\mathbf{v}) $ represents the radix ten equivalent of the binary vector $ \mathbf{v} $. In other words, fragment $ \mathbf{v} $ chooses the $ \operatorname{decimal}(\mathbf{v}) $-th column of $ \mathbf{A} $ as the codeword for transmission. After the BS determines the active codewords through a CS support recovery method, the tree-based outer decoder reconstructs the entire message by recombining the slot-wise fragments fitting exactly the parity check rules.

\begin{figure*}[!t]
	\centering
	\subfigure[]{\includegraphics[width=7.4cm]{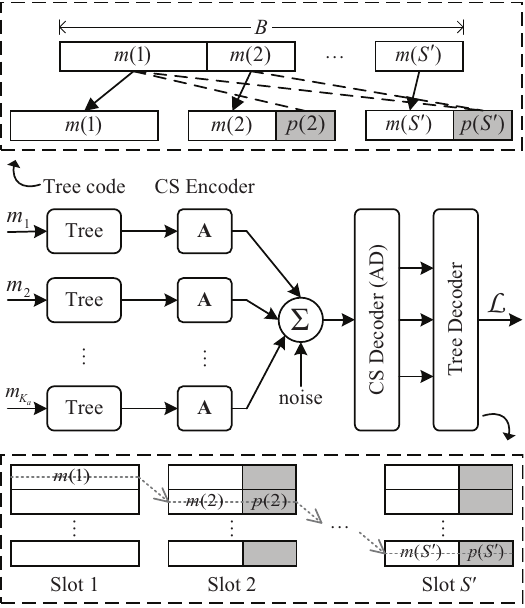}
		\label{treecode}}
	\hspace{0.5cm}
	\subfigure[]{\includegraphics[width=7.4cm]{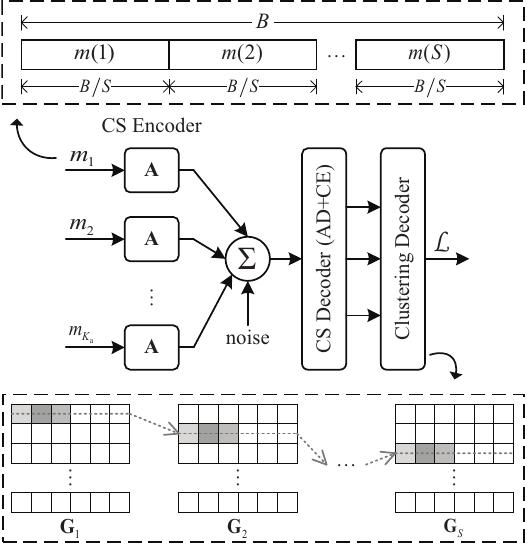}
		\label{cluscode}}
	\caption{Schematic diagrams of two transmission schemes for URA with $ m_{k} $ the transmitted information and $ \mathcal{L} $ the output message list: (a) the overall architecture of the CCS scheme, where the transmission frame is composed of portioned information bits $ m( s ) $ and added parity check bits $ p( s ) $; (b) the overall architecture of the proposed UCS scheme.}
\end{figure*}

We suggest that the unique angular propagation pattern indicated by angular domain channels pertaining to each active user already offers adequate information to regroup messages scattered among different transmission slots. User data propagates through different scatterers with different arriving angles and energies to the BS. Therefore, the sparsity and magnitude of each entry of the angular domain channel vector vary between users. These channel statistics are assumed to be almost unchanged within the short period of time when grant-free URA happens. Hence, the message stitching process can be rendered into distinguishing recovered channels of each active user from different slots.

The proposed uncoupled transmission scheme exploiting angular domain sparsity is illustrated in \mbox{Fig. \ref{cluscode}}. Without appending redundancies, the $ B $-bit message is divided into $ S = \lceil B / J \rceil $ fragments of length $ J $, each encoded by the CS encoder as discussed in the above context. After transmitting codewords over noisy channels, it is the task for the BS to determine active codewords and also retrieve their corresponding channels. Finally, slot-distributed codewords of each entire packet are recognized by a clustering decoder based on their similarity. In summary, the UCS regime forms three significant departures from CCS:
\begin{enumerate}[leftmargin=*]
	\item
	The data structure of CCS includes both information bits and parity check bits, while that of UCS contains only information bits without redundancies for concatenation.
	\item
	The inner decoder of CCS only retrieves the codeword activity pattern, whereas the CS decoder under the uncoupled framework also performs CE for these active codewords.
	\item
	The outer tree decoder in CCS is replaced in UCS by a clustering-based decoder.
\end{enumerate}
Considering that no redundancies are required to couple information bits across slots, the UCS scheme is foreseeable to manifest high spectral efficiency. In the next section, we elaborate the EM-MRF-GAMP algorithm operating as the CS decoder for AD and CE. And in Section V, the slot-balanced $ K $-means algorithm is addressed to enforce the clustering-based message stitching.

\section{Proposed Compressed Sensing Algorithm for Joint Activity Detection and Channel Estimation}

In this section, we present the EM-MRF-GAMP algorithm as the kernel of the CS decoder in UCS. First, we recognize the AD and CE problems in the CS recovery paradigm. Afterwards, under the Bayesian inference framework, the MRF model is introduced to model the underlying clustered support structure of the sparse angular domain channels. On such bases, we resort to the message passing strategy for sparse signal recovery. Finally, we leverage the EM framework to infer important channel parameters to help with the reconstruction.

\subsection{Activity Detection and Channel Estimation as a Compressed Sensing Problem}

Within this paper, the associated channel to the $ i $-th codeword is defined by the $ i $-th column of $ \widetilde{\mathbf{X}} $ expressed as
\begin{align}
	\widetilde{\mathbf{x}}_{i,:} = \sum_{k \in \mathcal{K}_{a}} \xi_{i, k} \mathbf{h}_{k}^{T} \label{xcolumn}
\end{align}
where $ \xi_{i, k} $ is the $ (i, k) $-th element of $ \mathbf{\Xi} $ in \eqref{sigmod}: it takes nonzero value only if $ i = i_{k} $, i.e., at least one user arranges to send the $ i $-th codeword, in which case, such a codeword is said to be ``active''. It is easily seen that to perform AD and CE to the set of codewords is to recover $ \widetilde{\mathbf{X}} $ from the noisy observation $ \widetilde{\mathbf{Y}} $ in \eqref{sigmod}. Since all users choose codewords independently and uniformly from the common codebook, $ \widetilde{\mathbf{x}}_{i,:} $ is identically zero with probability $ (1 - 2^{-J})^{K_{\mathrm{a}}} $. Given $ K_{\mathrm{a}} = \left| \mathcal{K}_{\mathrm{a}} \right|  \ll 2^{J} $, the matrix $ \widetilde{\mathbf{X}} $ is row-sparse. Profited from the sparse nature of the angular domain channel $ \mathbf{h}_{k} $, the sparsity of $ \widetilde{\mathbf{X}} $ is further encouraged within each row. Therefore, we see that the joint AD and CE problem is equivalent to a CS recovery problem.

Throughout this paper, we assign a Laplacian prior to each channel coefficient, as particulars will be discussed later. Since the Laplacian distribution is defined only over the real number field, we tune the complex-valued model \eqref{sigmod} to the following equivalent real-valued model:
\begin{align}
	\underbrace{\begin{bmatrix}
			\Re \{ \widetilde{\mathbf{Y}} \} \\[0.4em]
			\Im \{ \widetilde{\mathbf{Y}} \}
	\end{bmatrix}}_{\substack {\\ {\triangleq} \\ \mathbf{Y}}} =
	\underbrace{\begin{bmatrix}
			\Re \{ \widetilde{\mathbf{A}} \} & \!\!\!\! -\Im \{ \widetilde{\mathbf{A}} \} \\[0.4em]
			\Im \{ \widetilde{\mathbf{A}} \} & \!\!\!\! \Re \{ \widetilde{\mathbf{A}} \}
	\end{bmatrix}}_{\substack {\\ {\triangleq } \\ {\mathbf{A}}}}
	\underbrace {\begin{bmatrix}
			\Re \{ \widetilde{\mathbf{X}} \} \\[0.4em]
			\Im \{ \widetilde{\mathbf{X}} \}
	\end{bmatrix} }_{\substack {\\ {\triangleq} \\ {\mathbf{X}}}} +
	\underbrace {\begin{bmatrix}
			\Re \{ \widetilde{\mathbf{W}} \} \\[0.4em]
			\Im \{ \widetilde{\mathbf{W}} \}
	\end{bmatrix} }_{\substack {\\ {\triangleq } \\ {\mathbf{W}}}}.
\end{align}
For convenience, we divide the row index of $ \mathbf{X} $ (i.e., $ j \in [ 2^{J + 1} ] $) into two sets of sequences: the row index of the equivalent real part is denoted by $ j_{\mathrm{re}} \in [ 2^{J} ] $ and that of the imaginary part by $ j_{\mathrm{im}} = j_{\mathrm{re}} + 2^{J} $. Except the row sparsity inherited from $ \widetilde{\mathbf{X}} $, the matrix $ \mathbf{X} $ also possesses a group sparsity structure since $ \mathbf{x}_{j_{\mathrm{re}, :}} = \Re \{ \widetilde{\mathbf{x}}_{i, :} \} $ and $ \mathbf{x}_{j_{\mathrm{im}, :}} = \Im \{ \widetilde{\mathbf{x}}_{i, :} \} $ share the same active state.

\subsection{Probability Model}

We follow the Bayesian approach to retrieve $ \mathbf{X} $ from the received noisy superposition. For convenience, we represent the probability distribution function (pdf) of a true but unknown distribution by $ p_{0}(\cdot) $, and the postulated prior used for inference algorithm design by $ p(\cdot) $. First, we assign a zero-mean Laplacian distribution to each angular domain channel coefficient, i.e.,
\begin{align}
	\! p( h_{\mathrm{re}} ) = \frac{\lambda}{2} \exp \left( - \lambda | h_{\mathrm{re}} | \right), \ p( h_{\mathrm{im}} ) = \frac{\lambda}{2} \exp \left( - \lambda | h_{\mathrm{im}} | \right)
\end{align}
where $ h_{\mathrm{re}} $ and $ h_{\mathrm{im}} $ are the real and imaginary part of the channel coefficient $ h $, respectively, and $ \lambda $ is a scale parameter known as the Laplace rate. The motivation comes from \cite{BSY19} where the authors suggest employing Laplacian distributed random variables to model the MIMO mmWave channel coefficients in the angular domain, which are obtained by a DFT transformation \cite{S02} similar to our case in \eqref{DFTtrans}. It is found in \cite{BSY19} that the designed Bayes-optimal channel estimator under a Laplacian prior exhibits improvements in channel estimation accuracy and convergence rate compared to the Gaussian mixture prior \cite{VS13}. Subsequently, we give a Bernoulli-Laplacian prior distribution to each entry of the sparse matrix $ \mathbf{X} $, represented as
\begin{align}
	p( x_{j m} | b_{j' m} ) =  \dfrac{\lambda}{2} \exp \left( - \lambda \left| x_{j m} \right| \right) \delta(b_{j' m} - 1) + \delta(x_{j m}) \delta(b_{j' m} + 1)
\end{align}
where $ \delta(\cdot) $ denotes the Dirac function, and $ b_{j' m} \in \{-1, 1\} $ with $ j' = j - 2^{J} \lfloor j / 2^{J} \rfloor $ is a binary state capturing the group support structure of the real and imaginary part of the complex $ \widetilde{x}_{j' m} $; $ b_{j' m} = \pm 1 $ signifies that both $ x_{j' m} = \Re \{\widetilde{x}_{j' m}\} $ and $ x_{j' + 2^{J}, m} = \Im \{\widetilde{x}_{j' m}\} $ are nonzero/zero.

We take into account the clustered support structure of the angular domain channel coefficients by leveraging an MRF prior at the active state side. The motivation comes from the widely application of the MRF prior in modeling two-dimensional block-sparse image signals in many image recovery methods \cite{ZYH20}. Such a prior has the potential to encourage clustered sparsity and suppress ``isolated coefficients'' whose activity pattern is different from that of other coefficients. To model the hidden binary state of the channel of the $ j' $-th codeword, denoted by $ \mathbf{b}_{j', :} = [b_{j' 1}, \dots, b_{j' M}] \in \{-1, 1\}^{1 \times M} $, we employ an Ising model \cite{SS11}, i.e., $ p( \mathbf{b}_{j', :} ) $ is obtained by sampling
\begin{align}
	\exp \left( \sum_{m = 1}^{M} \left( \frac{1}{2} \sum_{k \in \mathcal{R}_{m}} \beta_{j'} b_{j' k} - \alpha_{j'} \right) b_{j' m} \right) = \left( \prod_{m = 1}^{M} \prod_{k \in \mathcal{R}_{m}} \exp \left( \beta_{j'} b_{j' m} b_{j' k} \right) \right)^{\frac{1}{2}} \prod_{m = 1}^{M} \exp \left( -\alpha_{j'} b_{j' m} \right) \label{Ising}
\end{align}
at $ \mathbf{b}_{j', :} $, where $ \mathcal{R}_{m} \subset \{1, \dots, M\} \setminus m $ is the set of related entries of index $ m $. The Ising prior depicts the sparsity and the interaction between parameters of $ \mathbf{b}_{j'} $ by arguments $ \alpha_{j'} $ and $ \beta_{j'} $, respectively. A higher magnitude of $ \alpha_{j'} $ indicates a sparser activity pattern, and a larger value of $ \beta_{j'} $ heightens the covariance between related entries.

\begin{figure*}[!t]
	\centering
	\subfigure[]{\includegraphics[height=7.5cm]{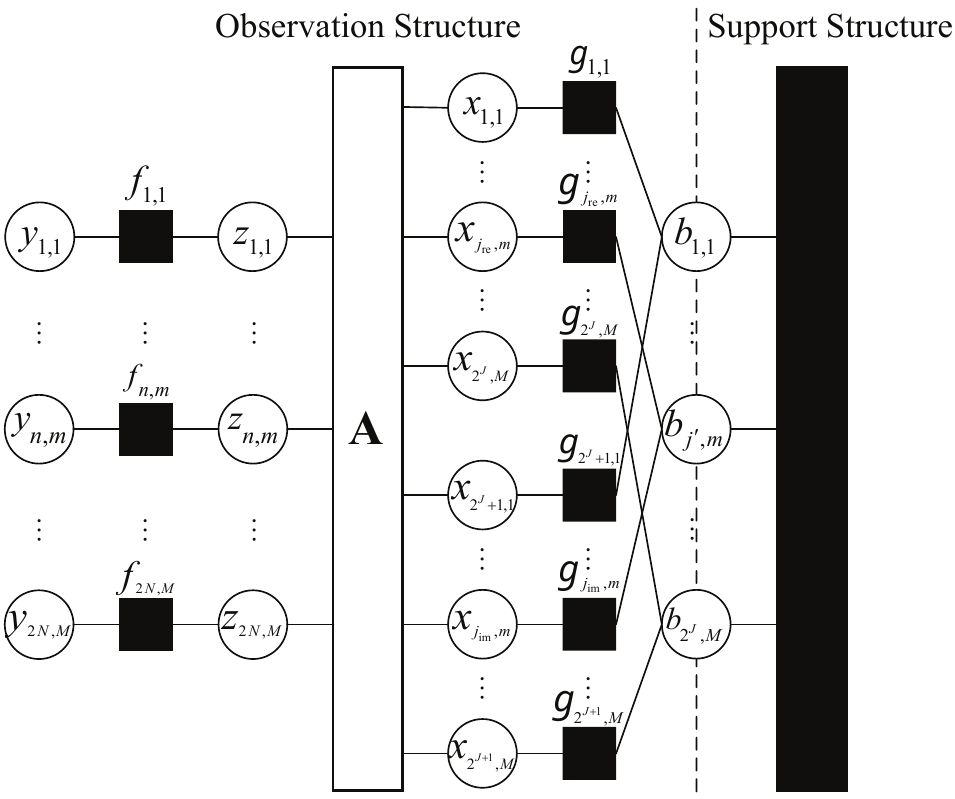}
		\label{GAMPfactor}}
	\subfigure[]{\includegraphics[height=7.5cm]{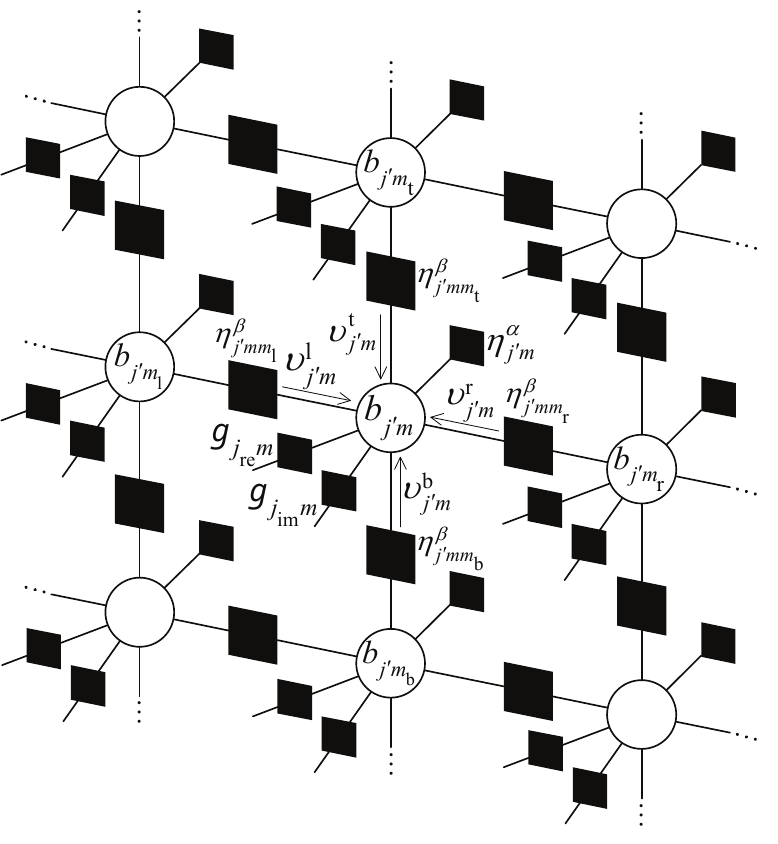}
		\label{MRFfactor}}
	\caption{Factor graphs associated to the model in \eqref{map}: (a) Factor graph for the hierarchical probability model in \eqref{map}, where the box marked `A' represents the process of RLM and adsorbs factor nodes $ \left\lbrace p( z_{n m} | \mathbf{a}_{n, :} \mathbf{x}_{m} ): n \in [N], m \in [M] \right\rbrace $; (b) Factor graph for the MRF support structure.}
	\label{factorgraph}
\end{figure*}

\begin{table}[!t]
	\renewcommand{\arraystretch}{1.5}
	\caption{Notations of Factor Nodes in \mbox{Fig. \ref{factorgraph}}}
	\label{notation}
	\centering
	\begin{tabular}{|c|c|c|}
		\hline
		\textbf{Factor} & \textbf{Distribution} & \textbf{Functional Form} \\
		\hline
		$ f_{l m} $ & $ p ( y_{l m} | z_{l m} ) $ & $ \frac{1}{\sqrt{2 \pi \sigma^{2}}} \exp \left( - \frac{(y_{l m} - z_{l m})^{2}}{2 \sigma^{2}} \right) $ \\
		\hline
		$ g_{j m} $ & $ p\left( x_{j m} | b_{j' m} \right) $ & $ \frac{\lambda}{2} \exp ( - \lambda | x_{j m} | ) \delta(b_{j' m} - 1) $ $ + \delta(x_{j m}) \delta(b_{j' m} + 1) $ \\
		\hline
		$ \eta_{j' m}^{\alpha} $ & $ \backslash $ & $ \exp \left( -\alpha_{j'} b_{j' m} \right) $ \\
		\hline
		$ \eta_{j' m k}^{\beta} $ & $ \backslash $ & $ \exp \left( \beta_{j'} b_{j' m} b_{j' k} \right) $ \\
		\hline
	\end{tabular}
	%\vspace{-10pt}
\end{table}

Denote by $ \mathbf{B} = [ \mathbf{b}_{1, :}^{T}, \dots, \mathbf{b}_{2^{J}, :}^{T} ]^{T} \in \{-1, 1\}^{2^{J} \times M} $ the binary state matrix. To infer $ \mathbf{B} $ and $ \mathbf{X} $ from the observed signal $ \mathbf{Y} $, we derive the posterior probability density of $ \mathbf{B} $ and $ \mathbf{X} $ given $ \mathbf{Y} $ as
\begin{align}
	p \left( \mathbf{B}, \mathbf{X} | \mathbf{Y} \right) &\propto p\left( \mathbf{Y} | \mathbf{B}, \mathbf{X} \right) p\left( \mathbf{X} | \mathbf{B} \right) p\left( \mathbf{B} \right) \notag \\  &\propto \exp \left( -\frac{1}{\sigma^{2}} \|\mathbf{Y} - \mathbf{Z}\|_{2}^{2} \right) \prod_{j} p\left( \mathbf{x}_{j, :} | \mathbf{b}_{j', :} \right) p\left( \mathbf{b}_{j', :} \right) \label{map}
\end{align}
where $ \mathbf{Z} = \mathbf{A} \mathbf{X} $ is the output of a random linear mixing (RLM) transform \cite{R11} with $ \mathbf{X} $ the input. We demonstrate the connections of random variables in \eqref{map} by a factor graph as shown in \mbox{Fig. \ref{factorgraph}}, where circles and squares signify variable nodes and factor nodes, respectively. The notations of factor nodes are detailed in Table \ref{notation}. \mbox{Fig. \ref{GAMPfactor}} generally illustrates the  hierarchical probability model, and \mbox{Fig. \ref{MRFfactor}} concretely describes the MRF support estimation module, where we arrange the factor graph in a two-dimension shape corresponding to the UPA arrangement. Except for the nodes at edges, each support variable node $ b_{j' m} $ is linked to four adjacent nodes termed $ b_{j' m_{\mathrm{l}}} = b_{j', m - M_{\mathrm{v}}} $, $ b_{j' m_{\mathrm{r}}} = b_{j', m + M_{\mathrm{v}}} $, $ b_{j' m_{\mathrm{t}}} = b_{j', m - 1} $, and $ b_{j' m_{\mathrm{b}}} = b_{j', m + 1} $ from the left, right, top, and bottom, respectively.

Unfortunately, the minimum mean square error (MMSE) estimation or the maximum \emph{a posteriori} (MAP) estimation with respect to $ p ( \mathbf{B}, \mathbf{X} | \mathbf{Y} ) $ in \eqref{map} is hard to carry out straightforwardly. Especially when RLM occurs, it is computationally intractable to reach a precise posterior distribution form of any individual component $ x $ since it involves marginalizing a joint distribution with high dimensions. In order to obtain a tractable proxy, we refer to the framework of GAMP \cite{R11} and propose a message passing based method, as detailed in what follows.

\subsection{Message Passing Algorithm for Signal Reconstruction}

As a practical approach to tackle the RLM estimation problem, GAMP employs loopy belief propagation (BP) over the factor graph to make approximate inference of the marginal. For the MMSE estimation of $ p ( \mathbf{B}, \mathbf{X} | \mathbf{Y} ) $, the sum-product algorithm \cite{KFL01} is applied to reduce the number of messages involved in propagation. Furthermore, for a random Gaussian i.i.d. transformation matrix $ \mathbf{A} $ under the large system limit hypothesis, i.e., $ 2^{J} \to \infty $, the messages passed between the edges of the factor graph admit very good Gaussian approximations. This helps to simplify the loopy message passing process to iteratively update means and variances of Gaussian distributions, following the algorithmic description detailed in Algorithm \ref{GAMP}.

\begin{algorithm}[!t]
	\caption{EM-MRF-GAMP with Laplacian Prior}
	\label{GAMP}
	\begin{algorithmic}[1]
		\STATE \textbf{Input:} Observed signal $ \mathbf{Y} $, measurement matrix $ \mathbf{A} $, precision
		tolerance $ \tau $, maximum number of iterations $ T_{\mathrm{max}} $ and $ T_{\mathrm{mrf}} $
		\STATE \textbf{Initialize:} \\
		\begin{spacing}{1.2}
			$ \quad \forall n, m: \widehat{s}_{n m}( 0 ) = 0  $, $ \forall j, m: $ choose $ \widehat{x}_{j m}( 1 ) $, $ \mu_{j m}^{x}( t ) $, \\
			$ \quad \forall j': \alpha_{j'} = \beta_{j'} = 0.4 $, $ \forall j', m, d: \kappa_{j' m_{d}} = 0.5 $, $ \lambda = 1 $, $ \sigma^{2} = \frac{\| \mathbf{Y} \|_{F}^{2}}{2 M N ( R + 1 )} $
		\end{spacing} \vspace{0pt}
		\FOR{$ t = 1,2,\dots,T_{\mathrm{max}} $}
		\begin{spacing}{1.5}
			\STATE $ \forall n, m: \mu_{n m}^{p}( t ) = \sum_{j} \left| a_{n j} \right|^{2} \mu_{j m}^{x}( t ) $
			\STATE $ \forall n, m: \widehat{p}_{n m}( t ) = \sum_{j} a_{n j} \widehat{x}_{j m}( t ) - \mu_{n m}^{p}( t ) \widehat{s}_{n m}( t - 1 ) $
			\STATE $ \forall n, m: \mu_{n m}^{z}( t ) = \mu_{n m}^{p} \sigma^{2} / \left( \mu_{n m}^{p} + \sigma^{2} \right) $
			\STATE $ \forall n, m: \widehat{z}_{n m}( t ) = \left( \mu_{n m}^{p} y_{n m} + \sigma^{2} \widehat{p}_{n m} \right) / \left( \mu_{n m}^{p} + \sigma^{2} \right) $
			\STATE $ \forall n, m: \mu_{n m}^{s}( t ) = \left[ \mu_{n m}^{p}( t ) - \mu_{n m}^{z}( t ) \right] / \left[ \mu_{n m}^{p}( t ) \right]^{2} $
			\STATE $ \forall n, m: \widehat{s}_{n m}( t ) = \left[ \widehat{z}_{n m}( t ) - \widehat{p}_{n m}( t ) \right] / \mu_{n m}^{p}( t ) $
			\STATE $ \forall j, m: \mu_{j m}^{r}( t ) = \left[ \sum_{n} \left| a_{n j} \right|^{2} \mu_{n m}^{s}( t ) \right]^{-1} $
			\STATE $ \forall j, m: \widehat{r}_{j m}( t ) = \widehat{x}_{j m}( t ) + \mu_{j m}^{r}( t ) \sum_{n} a_{n j} \widehat{s}_{n m}( t ) $
		\end{spacing} \vspace{0pt}
		\STATE \textbf{\% MRF Support Estimation Module}
		\STATE $ \forall j,m $: Compute input $ \varpi_{j m}( t ) $ via \eqref{pi}
		\FOR{$ t_{\mathrm{mrf}} = 1,2,\dots,T_{\mathrm{mrf}} $}
		\STATE $ \forall j,m $: Update $ \nu_{jm}^{\mathrm{l}} $, $ \nu_{jm}^{\mathrm{r}} $, $ \nu_{jm}^{\mathrm{t}} $ and $ \nu_{jm}^{\mathrm{b}} $ via \eqref{MRFmp}
		\ENDFOR
		\STATE $ \forall j,m $: Compute output $ \rho_{j m} ( t ) $ via \eqref{rho}
		\begin{spacing}{1.3}
			\STATE $ \forall j, m: \widehat{x}_{j m}( t + 1 ) = \mathbb{E} \left\{ x_{j m} | \mathbf{Y} ; \widehat{r}_{j m}( t ), \mu_{j m}^{r}( t ), \rho_{j m}( t ), \lambda \right\} $
			\STATE $ \forall j, m: \mu_{j m}^{x}( t + 1 ) = \operatorname{Var} \left\{ x_{j m} | \mathbf{Y} ; \widehat{r}_{j m}( t ), \mu_{j m}^{r}( t ), \rho_{j m}( t ), \lambda \right\} $
		\end{spacing} \vspace{0pt}
		\STATE \textbf{\% EM Update}
		\STATE Update $ \sigma^{2} $ and $ \lambda $ via \eqref{sigma} and \eqref{lambda}, respectively
		\STATE \textbf{if} $ \| \widehat{\mathbf{X}}( t + 1 ) - \widehat{\mathbf{X}}( t ) \|_{F}^{2} < \tau \| \widehat{\mathbf{X}}( t ) \|_{F}^{2} $, \textbf{stop}
		\ENDFOR
		\STATE \textbf{Output:} Estimated signal $ \widehat{\mathbf{X}} $
	\end{algorithmic}
\end{algorithm}

The message to variable node $ z_{n m} $ from the RLM output side is calculated by integrating $ p( z_{n m} | \mathbf{a}_{n, :} \mathbf{x}_{m} ) $ over all the variable nodes related to the elements in $ \mathbf{x}_{m} $. According to the central limit theorem, such a calculation has a Gaussian approximation $ \mathcal{N}( z_{n m} ; \widehat{p}_{n m}, \mu_{n m}^p ) $ with variance $ \mu_{n m}^p $ and mean $ \widehat{p}_{n m} $ obtained from lines 4 and 5 in Algorithm \ref{GAMP}, separately. Note that an equivalent ``Onsager'' correction term \cite{DMM09} is appended when computing the Gaussian mean. For an AWGN channel, the mean and variance of the marginal posterior $ p_{0}(z_{n m} | \mathbf{Y}) $ can be approximated by an empirical calculation involving the product of two Gaussian distributions (see lines 6 and 7 of Algorithm \ref{GAMP}). Then, the scaled residual $ \widehat{s}_{n m} $ and the inverse-residual-variance $ \mu_{n m}^{s} $ are computed as detailed in lines 8 and 9. Finally, the inverse output message to variable node $ x_{j m} $ is also approximately Gaussian with mean $ \widehat{r}_{j m} $ and variance $ \mu_{n m}^{r} $ (see lines 10 and 11 of Algorithm \ref{GAMP}).

Now we concentrate on the message passing concerning the MRF support estimation module. In GAMP, the message from variable node $ x_{j m} $ to factor node $ f_{j m} $ takes on the same form as the RLM inverse output message, i.e., $ \nu_{x_{j m} \to g_{j m}} = \mathcal{N} ( x_{j m} ; \widehat{r}_{j m}, \mu^{r}_{j m} ) $.
In Appendix B, we derive the message from $ g_{j m} $ to $ b_{j' m} $ as
\begin{align}
	\nu_{g_{j m} \to b_{j' m}} = \varpi_{j m} \delta( b_{j' m} - 1 ) + ( 1 - \varpi_{j m} ) \delta( b_{j' m} + 1 )
\end{align}
with
\begin{align}
	\varpi_{j m} = \dfrac{I_{x}^{-} + I_{x}^{+}}{\mathcal{N} ( 0 ; \widehat{r}, \mu^{r} ) + ( I_{x}^{-} + I_{x}^{+} )} \label{pi}
\end{align}
where
\begin{align}
	I_{x}^{-} &= \dfrac{\lambda}{2} \exp \left( \dfrac{1}{2} \lambda^2 \mu^{r}_{j m} + \lambda \widehat{r}_{j m} \right) \Phi_{\mathcal{N}} \left( \tfrac{-\widehat{r}^{-}_{j m}}{\sqrt{\mu^{r}_{j m}}} \right) \\
	I_{x}^{+} &= \dfrac{\lambda}{2} \exp \left( \dfrac{1}{2} \lambda^2 \mu^{r}_{j m} - \lambda \widehat{r}_{j m} \right) \Phi_{\mathcal{N}} \left( \tfrac{\widehat{r}^{+}_{j m}}{\sqrt{\mu^{r}_{j m}}} \right)
\end{align}
$ \widehat{r}_{j m}^{-} = \widehat{r}_{j m} + \lambda \mu_{j m}^{r} $, $ \widehat{r}_{j m}^{+} = \widehat{r}_{j m} - \lambda \mu_{j m}^{r} $, and $ \Phi_{\mathcal{N}}(x) = \frac{1}{\sqrt{2 \pi}} \int_{- \infty}^{x} \exp \left( - \frac{t^{2}}{2} \right) \mathrm{d} t $ is the cumulative distribution function of a standard normal distribution. The parameter $ \varpi_{j m} \in ( 0, 1 ) $ is viewed as the MRF module input, providing preceding support information. Apart from factor node $ \eta_{j' m}^{\alpha} $ and the two coupled factor nodes $ g_{j_{\mathrm{re}} m} $ and $ g_{j_{\mathrm{im}} m} $ with $ j_{\mathrm{re}} = j' $ and $ j_{\mathrm{im}} = j' + 2^{J} $, node $ b_{j' m} $ is also linked to its four neighboring support variable nodes. We mark the messages from the left, right, top, and bottom direction by $ \nu_{j' m}^{\mathrm{l}} $, $ \nu_{j' m}^{\mathrm{r}} $, $ \nu_{j' m}^{\mathrm{t}} $, and $ \nu_{j' m}^{\mathrm{b}} $, respectively. These messages can be calculated as
\begin{align}
	\nu_{j' m}^{d} = \kappa_{j' m}^{d} \delta ( b_{j' m} - 1 ) + \left( 1 - \kappa_{j' m}^{d} \right) \delta ( b_{j' m} + 1 ) \label{MRFmp}
\end{align}
where $ d \in \mathcal{D} = \{\mathrm{l}, \mathrm{r}, \mathrm{t}, \mathrm{b}\} $ and $ \kappa_{j' m}^{d} $ is given by
\begin{align}
	\kappa_{j' m}^{d} = \tfrac{\varpi_{j_{\mathrm{re}} m_{d}} \varpi_{j_{\mathrm{im}} m_{d}} \prod_{k \in \mathcal{D}_{d}} \kappa_{j' m_{d}}^{k} e^{- \alpha_{j'} + \beta_{j'}} + \left( 1 - \varpi_{j_{\mathrm{re}} m_{d}} \right) \left( 1 - \varpi_{j_{\mathrm{im}} m_{d}} \right) \prod_{k \in \mathcal{D}_{d}} \left( 1 - \kappa_{j' m_{d}}^{k} \right) e^{\alpha_{j'} - \beta_{j'}}}{\left( e^{\beta_{j'}} + e^{- \beta_{j'}} \right) \left( \varpi_{j_{\mathrm{re}} m_{d}} \varpi_{j_{\mathrm{im}} m_{d}} \prod_{k \in \mathcal{D}_{d}} \kappa_{j' m_{d}}^{k} e^{- \alpha_{j'}} + \left( 1 - \varpi_{j_{\mathrm{re}} m_{d}} \right) \left( 1 - \varpi_{j_{\mathrm{im}} m_{d}} \right) \prod_{k \in \mathcal{D}_{d}} \left( 1 - \kappa_{j' m_{d}}^{k} \right) e^{\alpha_{j'}} \right)}. \label{kappa}
\end{align}
In \eqref{kappa}, for instance, with respect to the left node $ b_{j' m_{\mathrm{l}}} $, $ \mathcal{D}_{\mathrm{l}} = \mathcal{D} \setminus \mathrm{r} = \{ \mathrm{l}, \mathrm{t}, \mathrm{b} \} $. Later, the backward message from $ b_{j' m} $ to $ g_{j m} $ is represented as
\setcounter{equation}{22}
\begin{align}
	\nu_{b_{j' m} \to g_{j m}} = \rho_{j m} \delta( b_{j' m} - 1 ) + ( 1 - \rho_{j m} ) \delta( b_{j' m} + 1 )
\end{align}
with
\begin{align}
	\rho_{j m} = \dfrac{\varpi_{q m} \prod_{d \in \mathcal{D}} \kappa_{j' m}^{d} e^{- \alpha_{j'}}}{\varpi_{q m} \prod_{d \in \mathcal{D}} \kappa_{j' m}^{d} e^{- \alpha_{j'}} + ( 1 - \varpi_{q m} ) \prod_{d \in \mathcal{D}} ( 1- \kappa_{j' m}^{d} ) e^{\alpha_{j'}}} \label{rho}
\end{align}
where the index $ q = j + 2^{J} $ when $ j \in \left[ 1, 2^{J} \right] $ and $ q = j - 2^{J} $ when $ j \in \left[ 2^{J} + 1, 2^{J + 1} \right] $. The parameter $ \rho_{j m} \in ( 0, 1 ) $ as the output of the MRF module offers estimated support information of $ x_{j m} $. Then, the message from $ g_{j m} $ to $ x_{j m} $ is a Bernoulli-Laplacian distribution expressed as
\begin{align}
	\nu_{g_{j m} \to x_{j m}} \propto \int_{b_{j' m}} p( x_{j m} | b_{j' m} ) \nu_{b_{j' m} \to g_{j m}} = \rho_{j m} \dfrac{\lambda}{2} \exp \left( - \lambda \left| x_{j m} \right| \right) + ( 1 - \rho_{j m} ) \delta(x_{j m}). \label{x}
\end{align}
As special cases, in Appendix B, we give examples of message updates of variable nodes in the edges/corners of the MRF structure.

We approximate the true marginal posterior $ p_{0}( x_{j m} | \mathbf{Y} ) $ by
\begin{align}
	p( x_{j m} | \mathbf{Y} ; \widehat{r}_{j m}, \mu^{r}_{j m}, \rho_{j m}, \lambda ) \propto \mathcal{N} ( x_{j m} ; \widehat{r}_{j m}, \mu^{r}_{j m} ) \cdot \nu_{g_{j m} \to x_{j m}} \label{marg}
\end{align}
using the aforementioned RLM inverse output Gaussian message and message $ \nu_{g_{j m} \to x_{j m}} $.
In Appendix B, we achieve closed forms of the marginal posterior mean and variance of $ x_{j m} $, in turn expressed as
\begin{align}
	\label{xhat}
	\widehat{x}_{j m} &= \rho_{j m} \tfrac{I_{x}^{-}}{I_{x}} \left[ \widehat{r}_{j m}^{-} - \mu_{j m}^{r} \tfrac{\mathcal{N} ( 0 ; \widehat{r}_{j m}^{-}, \mu_{j m}^{r} )}{\Phi_{\mathcal{N}} \left( -\widehat{r}_{j m}^{-} / \sqrt{\mu_{j m}^{r}} \right)} \right] + \rho_{j m} \tfrac{I_{x}^{+}}{I_{x}} \left[ \widehat{r}_{j m}^{+} + \mu_{j m}^{r} \tfrac{\mathcal{N} ( 0 ; \widehat{r}_{j m}^{+}, \mu_{j m}^{r} )}{\Phi_{\mathcal{N}} \left( \widehat{r}_{j m}^{+} / \sqrt{\mu_{j m}^{r}} \right)} \right] \\
	\label{xvar}
	\mu_{j m}^{x} &= \rho_{j m} \tfrac{I_{x}^{-}}{I_{x}} \left[ ( \widehat{r}_{j m}^{-} )^{2} \! + \! \mu_{j m}^{r} \! - \! \tfrac{\widehat{r}_{j m}^{-} \mu_{j m}^{r} \mathcal{N} ( 0 ; \widehat{r}_{j m}^{-}, \mu_{j m}^{r} )}{\Phi_{\mathcal{N}} \left( -\widehat{r}_{j m}^{-} / \sqrt{\mu_{j m}^{r}} \right)} \right] \! + \! \rho_{j m} \tfrac{I_{x}^{+}}{I_{x}} \left[ ( \widehat{r}_{j m}^{+} )^{2} \! + \! \mu_{j m}^{r} \! + \! \tfrac{\widehat{r}_{j m}^{+} \mu_{j m}^{r} \mathcal{N} ( 0 ; \widehat{r}_{j m}^{+}, \mu_{j m}^{r} )}{\Phi_{\mathcal{N}} \left( \widehat{r}_{j m}^{+} / \sqrt{\mu_{j m}^{r}} \right)} \right] \! - \! \widehat{x}_{j m}^{2}
\end{align}
where the normalization constant $ I_{x} $ is given by
\begin{align}
	I_{x} = \int_{x} \mathcal{N} ( x ; \widehat{r}_{j m}, \mu_{j m}^{r} ) \nu_{g_{j m} \to x_{j m}} = ( 1 - \rho_{j m} ) \mathcal{N} ( 0 ; \widehat{r}_{j m}, \mu_{j m}^{r} ) + \rho_{j m} \left( I_{x}^{-} + I_{x}^{+} \right). \label{Ix}
\end{align}

The aforementioned message components in Algorithm \ref{GAMP} are updated iteratively until a certain stopping criterion is satisfied. Apart from the limits on the maximum number of iterations, we leverage another normalized mean squared error (NMSE) based stopping criterion (see line 22 in \mbox{Algorithm \ref{GAMP}}) for certain tolerance $ \tau $. At last, the complex-valued estimation of $ \widetilde{\mathbf{X}} $, denoted by $ \underline{\mathbf{X}} $, can be easily obtained from the real-valued estimation $ \widehat{\mathbf{X}} $, i.e.,
\begin{align}
	\underline{\mathbf{X}} = \left[ \widehat{\mathbf{x}}_{1,:}^{T}, \dots, \widehat{\mathbf{x}}_{2^{J},:}^{T} \right]^{T} + \bar{i} \left[ \widehat{\mathbf{x}}_{2^{J} + 1,:}^{T}, \dots, \widehat{\mathbf{x}}_{2^{J + 1},:}^{T} \right]^{T}
\end{align}
where $ \bar{i} = \sqrt{-1} $.

Finally, to learn the activity pattern of codewords, we make a hard decision on the support of codewords with an appropriate threshold $ \upsilon $:
\begin{align}
	\mathcal{X} = \left\lbrace i : \left\| \underline{\mathbf{x}}_{i, :} \right\|^{2} > \upsilon, i \in \left[ 2^{J} \right] \right\rbrace \label{L}
\end{align}
where $ \underline{\mathbf{x}}_{i, :} $ is the $ i $-th row of $ \underline{\mathbf{X}} $. Note that the length of $ \mathcal{X} $ is not obligated to be $ K_{\mathrm{a}} $ since two or more users may select the same codeword to send at the same transmission slot. Recall that in the URA scenario, the BS has no obligation to discern any active user identity, thus we do not seek to reconstruct the codeword selection matrix $ \mathbf{\Xi} $ in \eqref{sigmod}.

\subsection{Parameter Learning via Expectation Maximization}

Note that some parameters required by the iterative process of GAMP, including noise variance $ \sigma^{2} $ and Laplace rate $ \lambda $, are typically unknown to the detection side. Denote by $ \boldsymbol{\theta} = [ \sigma^{2}, \lambda ]^{T} $ the complete vector of unknown parameters. Our purpose is to find the ML estimate $ \widehat{\boldsymbol{\theta}} $ of $ \boldsymbol{\theta} $ from the received signal $ \mathbf{Y} $, i.e., $ \widehat{\boldsymbol{\theta}} = \arg \operatorname*{max} \limits_{\boldsymbol{\theta}} \ln p( \mathbf{Y} ; \boldsymbol{\theta} ) $. The EM algorithm gives the solution to $ \widehat{\boldsymbol{\theta}} $ recursively taking the following two steps (detailed deduction can be found in \cite{VS13}):
\begin{itemize}[leftmargin=*]
	\item
	\textbf{Expectation Step (E-STEP):} Replace $ \ln p( \mathbf{Y} ; \boldsymbol{\theta} ) $ by the conditional expectation with respect to $ p( \mathbf{X} | \mathbf{Y} ; \widehat{\boldsymbol{\theta}}( t ) ) $:
	\begin{align}
		\! \mathbb{E} \left\lbrace \ln p( \mathbf{Y}, \mathbf{X} ; \boldsymbol{\theta} ) \right\rbrace = \int_{\mathbf{X}} p( \mathbf{X} | \mathbf{Y} ; \widehat{\boldsymbol{\theta}}( t ) ) \ln p( \mathbf{Y}, \mathbf{X} ; \boldsymbol{\theta} ).
	\end{align}
	\item
	\textbf{Maximization Step (M-STEP):} Maximize the above average log-likelihood:
	\begin{align}
		\widehat{\boldsymbol{\theta}}( t + 1 ) = \arg \operatorname*{max} \limits_{\boldsymbol{\theta}} \ \mathbb{E} \left\lbrace \ln p( \mathbf{Y}, \mathbf{X} ; \boldsymbol{\theta} ) \right\rbrace.
	\end{align}
\end{itemize}

For convenience, we divide the overall ML estimation problem into two tractable parts, each independently solved by an EM algorithm. These algorithms manifest as the recursions of the following optimization problems \cite{VS13}
\begin{align}
	\label{EMsigma}
	\theta_{\sigma^2}( t + 1) &= \arg \operatorname*{max} \limits_{\sigma^2} \ \sum_{n} \sum_{m} \mathbb{E} \left\lbrace \ln p( y_{n m} | z_{n m} ; \sigma^{2} ) \right\rbrace
\end{align}
where the expectation is taken over $ p( z_{n m} | \mathbf{Y} ; \boldsymbol{\theta} ) $, and
\begin{align}
	\label{EMlambda}
	\theta_{\lambda}( t + 1 ) &= \arg \operatorname*{max} \limits_{\lambda} \ \sum_{j} \sum_{m} \mathbb{E} \left\lbrace \ln p( x_{j m} ; \lambda ) \right\rbrace
\end{align}
where the expectation is taken over $ p( x_{j m} | \mathbf{Y} ; \boldsymbol{\theta} ) $. Note that alternately solving \eqref{EMsigma} and \eqref{EMlambda} may not lead to the optimal $ \widehat{\boldsymbol{\theta}} $, but it is more computationally tractable and helps with the convergence.

We first derive the EM update for the noise variance $ \sigma^{2} $. The maximizing value of $ \sigma^{2} $ in \eqref{EMsigma} is certainly the value of $ \sigma^{2} $ when the derivative of the sum equals to zero, i.e.,
\begin{align}
	\sum_{n} \sum_{m} \int_{z_{n m}} p( z_{n m} | \mathbf{Y} ; \boldsymbol{\theta} ) \dfrac{\mathrm{d}}{\mathrm{d} \sigma^{2}} \ln p( y_{n m} | z_{n m} ; \sigma^{2} ) = 0. \label{deri_sigma1}
\end{align}
With $ p( y_{n m} | z_{n m} ; \sigma^{2} ) = \mathcal{N}( y_{n m} ; z_{n m}, \sigma^{2} ) $, we have
\begin{align}
	\dfrac{\mathrm{d}}{\mathrm{d} \sigma^{2}} \ln p( y_{n m} | z_{n m} ; \sigma^{2} ) = \dfrac{1}{2 \sigma^{2}} \left[ \dfrac{( y_{n m} - z_{n m} )^{2}}{\sigma^{2}} - 1 \right]. \label{deri_sigma2}
\end{align}
By plugging \eqref{deri_sigma2} into \eqref{deri_sigma1}, we obtain the unique solution to \eqref{EMsigma} expressed as
\begin{align}
	\sigma^{2} &= \dfrac{1}{2 N M} \sum_{n} \sum_{m} \int_{z_{n m}} ( y_{n m} - z_{n m} )^{2} p( z_{n m} | \mathbf{Y} ; \boldsymbol{\theta} ) \notag \\
	&= \dfrac{1}{2 N M} \sum_{n} \sum_{m} \left[ ( y_{n m} - \widehat{z}_{n m} )^{2} + \mu_{n m}^{z} \right]. \label{sigma}
\end{align}

Then, similar processes can be implemented to learn the Laplace rate $ \lambda $. With the distribution of the component $ x_{j m} $ given in \eqref{x}, it is readily seen that
\begin{align}
	\dfrac{\mathrm{d}}{\mathrm{d} \lambda} \ln p( x_{j m} ; \lambda ) &= \dfrac{\frac{\rho_{j m}}{2} ( 1- \lambda \left| x_{j m} \right| ) \exp \left( - \lambda \left| x_{j m} \right| \right)}{\rho_{j m} \frac{\lambda}{2} \exp \left( - \lambda \left| x_{j m} \right| \right) + ( 1 - \rho_{j m} ) \delta(x_{j m})}
	= \begin{cases}
		0, & x_{j m} = 0 \\
		\tfrac{1}{\lambda} - \left| x_{j m} \right|, & x_{j m} \neq 0
	\end{cases}.
\end{align}
The derived function above is not continuous, thus, we define the closed ball $ \mathcal{X}_{\epsilon} = [ -\epsilon, \epsilon ] $ and its complementary set over the real number field $ \overline{\mathcal{X}}_{\epsilon} = \mathbb{R} \backslash \mathcal{X}_{\epsilon} $ to describe the field of integration. When $ \epsilon \to 0 $, the derivative of the sum conditional expectation in \eqref{EMlambda} can be computed as
\begin{align}
	&\sum_{j} \sum_{m} \int_{x_{j m}} p( x_{j m} | \mathbf{Y} ; \boldsymbol{\theta} ) \dfrac{\mathrm{d}}{\mathrm{d} \lambda} \ln p( x_{j m} ; \lambda ) \notag \\
	&\quad = \sum_{j} \sum_{m} \dfrac{\rho_{j m}}{\lambda} - \sum_{j} \sum_{m} \lim\limits_{\epsilon \to 0} \int_{x_{j m} \in \overline{\mathcal{X}}_{\epsilon}} \left| x_{j m} \right| p( x_{j m} | \mathbf{Y} ; \boldsymbol{\theta} ). \label{deri_lambda}
\end{align}
By setting \eqref{deri_lambda} to be zero, we have the EM update for the scale parameter $ \lambda $ expressed as
\begin{align}
	\lambda = \dfrac{\sum_{j} \sum_{m} \rho_{j m}}{\sum_{j} \sum_{m} \tfrac{\rho_{j m}}{I_{x}} \left\lbrace I_{x}^{+} \left[ \widehat{r}_{j m}^{+} + \mu_{j m}^{r} \tfrac{\mathcal{N} ( 0 ; \widehat{r}_{j m}^{+}, \mu_{j m}^{r} )}{\Phi_{\mathcal{N}} \left( \widehat{r}_{j m}^{+} / \sqrt{\mu_{j m}^{r}} \right)} \right] - I_{x}^{-} \left[ \widehat{r}_{j m}^{-} - \mu_{j m}^{r} \tfrac{\mathcal{N} ( 0 ; \widehat{r}_{j m}^{-}, \mu_{j m}^{r} )}{\Phi_{\mathcal{N}} \left( -\widehat{r}_{j m}^{-} / \sqrt{\mu_{j m}^{r}} \right)} \right] \right\rbrace}. \label{lambda}
\end{align}

For EM initialization, we set the Laplacian rate $ \lambda = 1 $ and the noise variance $ \sigma^{2} = \frac{\| \mathbf{Y} \|_{F}^{2}}{2 M N ( R + 1 )} $ with $ R $ the overall signal-to-noise ratio (SNR) defined by $ \mathbb{E}\{\left\| \mathbf{X} \right\|^{2}_{F}\} / \mathbb{E}\{\left\| \mathbf{W} \right\|^{2}_{F}\} $. When the true SNR is unknown, $ R = 100 $ is recommended \cite{VS13}. The EM framework can also be adapted to study MRF parameters, while we directly set $ \alpha_{j'} = \beta_{j'} = 0.4 $ as suggested in \cite{SS11}.

\subsection{Performance Analysis}

\subsubsection{Asymptotic Analysis}

It is well known that the AMP/GAMP algorithm can be analyzed by \emph{state evolution} (SE) \cite{BM11} in the asymptotic area where $ N,2^{J} \to \infty $ while their ratio converges to a fixed positive value $ \delta = 2^{J}/N $. Viewing the output $ \underline{\mathbf{X}} $ of EM-MRF-GAMP as a signal plus Gaussian noise, SE provides a scalar equivalent model for the mean square error performance prediction of the algorithm. Define a set of random vectors $ \widehat{\boldsymbol{x}}_{i}(t) = \boldsymbol{x}_{i} + \varrho_{i}(t) \boldsymbol{v}_{i} $, $ i \in [2^{J}] $, where $ \boldsymbol{x}_{i} \in \mathbb{C}^{M} $ captures the distribution of $ \widetilde{\mathbf{x}}_{i,:}^{T} $, $ \boldsymbol{v}_{i} \in \mathbb{C}^{M} \sim \mathcal{CN}(0,\mathbf{I}_{M}) $, and $ \varrho_{i} $ known as the \emph{state} is iteratively computed as \cite{BMN20}
\begin{align}
	\varrho^{2}(t+1) = 2\sigma^{2} + \delta \mathbb{E}\left\lbrace \left\| \eta_{\mathrm{de}}(\boldsymbol{x} + \varrho(t) \boldsymbol{v}) - \boldsymbol{x} \right\|^{2} \right\rbrace
\end{align}
where $ \eta_{\mathrm{de}}(\cdot) $ is the denoiser. Note that due to the underlying structured channel sparsity captured by MRF, the denoiser of EM-MRF-GAMP is non-separable \cite{MRB17}. Leveraging SE, we have the following proposition.

\begin{pro}
	Suppose that $ \boldsymbol{x} $ captures the distribution of $ \widetilde{\mathbf{x}}_{\cdot,:}^{T} $ in \eqref{sigmod} and $ \boldsymbol{v} \sim \mathcal{CN}(0, \mathbf{I}_{M}) $, the likelihood of $ \widehat{\boldsymbol{x}} = \boldsymbol{x} + \varrho \boldsymbol{v} $ given $ \boldsymbol{x} = \mathbf{0} $ is expressed as
	\begin{align}
		p(\widehat{\boldsymbol{x}} | \boldsymbol{x} = \mathbf{0}) = \dfrac{\exp \left( - \left\| \widehat{\boldsymbol{x}} \right\|^{2} \varrho^{-2} \right)}{\pi^{M} \varrho^{2M}}. \label{pro1}
	\end{align}
\end{pro}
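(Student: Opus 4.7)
The proof should be a direct specialization of the scalar-equivalent SE model $\widehat{\boldsymbol{x}}(t) = \boldsymbol{x} + \varrho(t)\boldsymbol{v}$ to the null input $\boldsymbol{x} = \mathbf{0}$, followed by writing down the standard complex Gaussian density. My plan is the following.

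First, I would substitute $\boldsymbol{x} = \mathbf{0}$ into the defining SE relation, so that $\widehat{\boldsymbol{x}} \mid \boldsymbol{x} = \mathbf{0}$ reduces to $\varrho \boldsymbol{v}$. Here I need to emphasize two facts that are built into the SE construction: (i) $\boldsymbol{v}$ is independent of $\boldsymbol{x}$, so conditioning on $\boldsymbol{x} = \mathbf{0}$ does not perturb the distribution of $\boldsymbol{v}$; and (ii) at a given iteration the state $\varrho$ is a deterministic scalar produced by the SE recursion, hence it can be treated as a known constant when forming conditional densities.

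Second, I would invoke the scaling property of circularly symmetric complex Gaussians: if $\boldsymbol{v} \sim \mathcal{CN}(\mathbf{0}, \mathbf{I}_M)$, then $\varrho \boldsymbol{v} \sim \mathcal{CN}(\mathbf{0}, \varrho^{2}\mathbf{I}_M)$. Writing down the standard density of this distribution gives
\begin{equation*}
p(\widehat{\boldsymbol{x}} \mid \boldsymbol{x} = \mathbf{0}) = \frac{1}{\pi^{M}\det(\varrho^{2}\mathbf{I}_M)} \exp\!\left(-\widehat{\boldsymbol{x}}^{H}(\varrho^{2}\mathbf{I}_M)^{-1}\widehat{\boldsymbol{x}}\right),
\end{equation*}
and simplifying with $\det(\varrho^{2}\mathbf{I}_M) = \varrho^{2M}$ and $(\varrho^{2}\mathbf{I}_M)^{-1} = \varrho^{-2}\mathbf{I}_M$ reduces the exponent to $-\|\widehat{\boldsymbol{x}}\|^{2}\varrho^{-2}$, which is exactly \eqref{pro1}.

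There is essentially no technical obstacle: the statement is a one-line consequence of the decoupled SE model plus the formula for the complex Gaussian PDF. If anything, the only point that merits a sentence of justification is the validity of the SE decoupling itself in this setting, i.e., that the output of EM-MRF-GAMP admits the signal-plus-Gaussian-noise representation $\widehat{\boldsymbol{x}}(t) = \boldsymbol{x} + \varrho(t)\boldsymbol{v}$ with $\boldsymbol{v} \sim \mathcal{CN}(\mathbf{0},\mathbf{I}_M)$ independent of $\boldsymbol{x}$; this is inherited from the non-separable SE results cited in the paragraph preceding the proposition, so I would simply refer back to that citation rather than re-deriving it.
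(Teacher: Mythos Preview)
Your proposal is correct and mirrors the paper's own one-line proof: conditioning on $\boldsymbol{x}=\mathbf{0}$ gives $\widehat{\boldsymbol{x}}=\varrho\boldsymbol{v}\sim\mathcal{CN}(\mathbf{0},\varrho^{2}\mathbf{I}_{M})$, and writing the corresponding complex Gaussian density yields \eqref{pro1}. Your additional remarks on independence, the determinism of $\varrho$, and the SE decoupling are sound but more detailed than what the paper records.
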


\begin{proof}
	Given $ \boldsymbol{x} = \mathbf{0} $, we have $ \widehat{\boldsymbol{x}} = \varrho \boldsymbol{v} \sim \mathcal{CN}(0, \varrho^{2} \mathbf{I}_{M}) $, leading to \eqref{pro1}.
\end{proof}

Now we evaluate the detection performance of EM-MRF-GAMP using the criterion of per-user probability of error (PUPE) in \cite{P07} defined as $ \text{PUPE} \triangleq \mathbb{E}\{ | \mathcal{A}_{\mathrm{a}} \backslash \mathcal{X} | / |\mathcal{A}_{\mathrm{a}}| \} $, where $ \mathcal{A}_{\mathrm{a}} $ is the set of indexes of active codewords, and $ \mathcal{X} $ is defined in \eqref{L}. For convenience but without loss of generality, we follow the assumption in \cite{P07,FHJ21} that exactly $ K_{\mathrm{a}} = |\mathcal{A}_{\mathrm{a}}| $ active codewords are determined active without codeword collisions (since $ 2^{J} \to \infty $). Thus, we have $ \mathbb{E}\{ | \mathcal{A}_{\mathrm{a}} \backslash \mathcal{X} | / |\mathcal{A}_{\mathrm{a}}| \} = \mathbb{E}\{ | \mathcal{X} \backslash \mathcal{A}_{\mathrm{a}} | / |\mathcal{X}| \} $, leading to the following corollary.

\begin{cor}
	Given $ \mathcal{A}_{\mathrm{a}} $ the set of indexes of active codewords, and $ \mathcal{X} $ in \eqref{L}. With a threshold $ \upsilon > 0 $, $ \mathrm{PUPE} = \mathbb{E}\{ | \mathcal{A}_{\mathrm{a}} \backslash \mathcal{X} | / |\mathcal{A}_{\mathrm{a}}| \} $ can be computed as
	\begin{align}
		\mathrm{PUPE} = \int_{\left\| \widehat{\boldsymbol{x}} \right\|^{2} > \upsilon} p(\widehat{\boldsymbol{x}} | \boldsymbol{x} = \mathbf{0}) d\widehat{\boldsymbol{x}} = \dfrac{\overline{\Gamma}(M, \upsilon \varrho^{-2})}{\Gamma(M)}
	\end{align}
	where $ \Gamma(\cdot) $ and $ \overline{\Gamma}(\cdot,\cdot) $ denote the Gamma function and the upper incomplete Gamma function, respectively. Further, suppose that the threshold $ \upsilon = c \mathbb{E}\{\left\| \boldsymbol{v} \right\|^{2} \} = c M \varrho^{2} $ with $ c > 1 $, we have
	\begin{align}
		\lim\limits_{M \to \infty} \dfrac{\overline{\Gamma}(M, \upsilon \varrho^{-2})}{\Gamma(M)} = 0.
	\end{align}
\end{cor}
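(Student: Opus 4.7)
The plan is to decompose the corollary into three steps: (i) recast PUPE as a per-codeword false-alarm probability, (ii) evaluate that probability via the complex Gaussian distribution in Proposition 1, and (iii) analyze the asymptotic behavior of the resulting tail integral as $M \to \infty$.

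First, I would justify the first equality using the setup that precedes the corollary. Under the stated assumption $|\mathcal{X}| = K_{\mathrm{a}} = |\mathcal{A}_{\mathrm{a}}|$ (asymptotically collision-free, exactly $K_{\mathrm{a}}$ codewords declared active), missed detections and false alarms are in one-to-one correspondence, so $\mathrm{PUPE} = \mathbb{E}\{|\mathcal{X}\setminus\mathcal{A}_{\mathrm{a}}|/|\mathcal{X}|\}$. Since the EM-MRF-GAMP output is described by the scalar-equivalent model $\widehat{\boldsymbol{x}}_{i} = \boldsymbol{x}_{i} + \varrho \boldsymbol{v}_{i}$ with i.i.d.\ $\boldsymbol{v}_{i}$, by exchangeability across inactive indices this expected ratio equals the per-codeword false-alarm probability $\Pr\{\|\widehat{\boldsymbol{x}}\|^{2} > \upsilon \mid \boldsymbol{x} = \mathbf{0}\}$, which is precisely the integral claimed.

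Second, I would evaluate the integral in closed form. By Proposition~1, $\widehat{\boldsymbol{x}} \mid \boldsymbol{x} = \mathbf{0} \sim \mathcal{CN}(\mathbf{0}, \varrho^{2}\mathbf{I}_{M})$, so the normalized squared norm $T \triangleq \|\widehat{\boldsymbol{x}}\|^{2}/\varrho^{2}$ is a sum of $M$ i.i.d.\ unit-mean exponentials and thus follows a $\mathrm{Gamma}(M,1)$ distribution with density $t^{M-1}e^{-t}/\Gamma(M)$. A direct change of variables to spherical coordinates (or simply invoking this fact) gives $\Pr\{\|\widehat{\boldsymbol{x}}\|^{2} > \upsilon\} = \Pr\{T > \upsilon\varrho^{-2}\} = \overline{\Gamma}(M,\upsilon\varrho^{-2})/\Gamma(M)$, which is the second equality of the corollary.

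Third, to establish the limit for $\upsilon = cM\varrho^{2}$ with $c > 1$, I substitute to obtain $\overline{\Gamma}(M,cM)/\Gamma(M) = \Pr\{T > cM\}$. Since $\mathbb{E}[T] = M$, this is a large-deviations tail. A Chernoff bound on a $\mathrm{Gamma}(M,1)$ random variable gives $\Pr\{T > cM\} \leq \exp(-M(c - 1 - \ln c))$, and elementary calculus shows $c - 1 - \ln c > 0$ whenever $c \neq 1$, so the bound decays exponentially to zero as $M \to \infty$. Equivalently, one may invoke the weak law of large numbers: $T/M \to 1$ in probability, hence $\Pr\{T/M > c\} \to 0$ for any fixed $c > 1$.

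I expect the only non-routine step to be the first one, namely identifying $\mathrm{PUPE}$ with the scalar tail probability. The closed-form evaluation is a textbook chi-square type computation, and the vanishing of the upper-incomplete Gamma ratio under $\upsilon = cM\varrho^{2}$, $c > 1$, follows from standard concentration for sums of i.i.d.\ exponentials; the only mild care needed is to keep the Chernoff exponent $c - 1 - \ln c$ strictly positive, which is immediate from the convexity of $-\ln c$ around $c = 1$.
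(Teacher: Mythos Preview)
Your proposal is correct. Steps (i) and (ii) mirror the paper's proof (Appendix~C): the paper also identifies $\mathrm{PUPE}$ with $\Pr\{\|\widehat{\boldsymbol{x}}\|^{2}>\upsilon\mid\boldsymbol{x}=\mathbf{0}\}$ and evaluates it via the $\chi^{2}_{2M}$ distribution, which is your $\mathrm{Gamma}(M,1)$ observation in equivalent form.

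The genuine difference is in step (iii). The paper invokes a full asymptotic expansion of $\overline{\Gamma}(M,cM)/\Gamma(M)$ in terms of $\operatorname{erfc}$ and Laguerre polynomials, with exponent $C^{2}/2 = c-1-\ln c$, and then applies the standard $\operatorname{erfc}$ asymptotic to drive everything to zero. Your Chernoff bound (or the weak law of large numbers) is a considerably more elementary route to the same conclusion; it even produces the same rate constant $c-1-\ln c$ directly as the Legendre transform of the exponential moment. The paper's expansion could in principle yield sharper pre-exponential constants, but for the stated corollary, which only asserts the limit is zero, your argument is cleaner and entirely sufficient.
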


\begin{proof}
	See Appendix C.
\end{proof}

Corollary 1 claims that with an appropriate threshold setting, the detection error rate of EM-MRF-GAMP tends to be zero when the number of antennas grows to infinity, revealing the benefit of massive MIMO. Note that to provide valuable insights of asymptotic AD performance of the proposed algorithm, we do not consider explicit expression or specific structure of the denoiser. Consequently, how the state $ \varrho $ evolves with the number of iterations is not addressed in the above analysis. Rigorous asymptotic analyses dealing with this issue for AMP with non-separable denoisers can be found in \cite{BMN20,MRB17}, where more strict assumptions are taken compared to those considered in this paper.

%There is an explicit expression of the denoiser fuction of EM-MRF-GAMP, which can be derived based on \cite{R11}. However, we skip the discussion here since the expression itself is irrelevant to the following analysis.

\subsubsection{Computational Complexity Analysis}

The computations for lines 6-9 in Algorithm \ref{GAMP} and those for lines 13, 17, and 18-19 yield the complexity of $ \mathcal{O}( N M ) $ and $ \mathcal{O}( 2^{J} M ) $, respectively. The calculations related to the MRF estimation module in line 15 have the complexity of $ \mathcal{O}( T_{\mathrm{mrf}} 2^{J} M ) $, where the number of iterations $ T_{\mathrm{mrf}} $ is relatively small and has limited effects to the overall complexity. The EM updates of $ \sigma^{2} $ and $ \lambda $ are computed in $ \mathcal{O}( N M ) $ and $ \mathcal{O}( 2^{J} M ) $ times, respectively. As $ 2^{J} $ grows, the most of the computing resources are contributed to the matrix multiplications in lines 4-5 and 10-11, each requiring $ 2^{J} N M $ multiplications. In general, the complexity order of the proposed algorithm per iteration is $ \mathcal{O}( 2^{J} N M ) $, which is on the same level as other message passing based CS algorithms like MMV-AMP \cite{LY18} and GAMP \cite{R11}. Since the computational complexity increases linearly with $ M $, the proposed EM-MRF-GAMP algorithm is computationally efficient in the massive MIMO setting.

\section{Proposed Clustering Algorithm for Clustering-Based Decoding}

After retrieving active codewords and their corresponding channels from all slots, the BS reconstructs the original message list by distinguishing slot-distributed channels of each active user in a clustering way. In this section, we provide a modified constrained clustering algorithm tailored for message stitching with a refinement to restrain the impact of codeword collision.

\subsection{Slot-Balanced $ K $-means for Constrained Clustering}

We provisionally consider an ideal circumstance where there are no users selecting the same codeword at the same time, such that exactly $ K_{\mathrm{a}} $ codewords are judged to be active in every slot. The clustering decoder aims to sort the associated channels into $ K_{\mathrm{a}} $ groups according to some notions of similarity, and obtain each message based on the permutation of codewords. As a well-known approach for data classification, $ K $-means clustering \cite{HW79} automatically partitions a data set into groups with low intra-group distances and high inter-group distances. It proceeds by choosing $ K $ random group centers as the initializer, and then iteratively amending them taking the following two steps:
\begin{itemize}[leftmargin=*]
	\item
	\textbf{Assignment Step:} Each data instance is assigned to the closest cluster center.
	\item
	\textbf{Update Step:} Each cluster center is updated to be the centroid of its constituent data instances.
\end{itemize}
These steps are repeated until there are no further changes in centroid locations. For convenience, we denote the reconstructed channels of active codewords at the $ s $-th slot by $ \mathbf{G}_{s} = [ ( \underline{\mathbf{x}}_{i_{1}, :}^{s} )^{T}, \dots, ( \underline{\mathbf{x}}_{i_{K_{\mathrm{a}}}, :}^{s} )^{T} ] \in \mathbb{C}^{M \times K_{\mathrm{a}}} $, $ i_k \in \mathcal{X}_{s} $. To find the main lobe of the angular domain channel obtained by a DFT transformation (see Appendix A), we take the absolute value of $ \mathbf{G}_{s} $ and construct the data set to be classified as $ \mathcal{R} = \{ \mathbf{R}_{s}: s \in [ S ] \} $ with $ \mathbf{R}_{s} = [ \mathbf{r}_{1}^{s}, \dots, \mathbf{r}_{K_{a}}^{s} ]^{T} = | \mathbf{G}_{s} | $. The center points (centroids) of $ K_{\mathrm{a}} $ groups are represented by $ \mathbf{C} = [ \mathbf{c}_{1}, \dots, \mathbf{c}_{K_{\mathrm{a}}} ]^{T} $.

Traditional $ K $-means algorithm set no limitation conditions when classifying data. However, in the application scene of message stitching, the decoder is mandatory to satisfy two obvious constraints \cite{SBM21}:
\begin{itemize}[leftmargin=*]
	\item
	\textbf{Constraint I:} Channels from the same slot can not be allocated to the same group.
	\item
	\textbf{Constraint II:} Each group must consist of $ S $ channels at the end of the clustering.
\end{itemize}
To proceed as in $ K $-means with Constraint II, we perform the assignment step on a per-slot basis, i.e., all $ K_{\mathrm{a}} $ channels obtained from the same slot are allocated to $ K_{\mathrm{a}} $ groups in one step. At each assignment step, to meet Constraint I, we tend to solve the following assignment problem:
\begin{subequations}
	\begin{align}
		\label{dis}
		\mathop{\operatorname{minimize}}\limits_{\mathbf{\Gamma}} & \ \sum\nolimits_{k = 1}^{K_{\mathrm{a}}} \sum\nolimits_{k' = 1}^{K_{\mathrm{a}}} \gamma_{k, k'} d( k, k' ) \\
		\operatorname{subject \ to} & \ \sum\nolimits_{k' = 1}^{K_{\mathrm{a}}} \gamma_{k, k'} = 1, \forall k \in [K_{\mathrm{a}}] \\
		& \ \sum\nolimits_{k = 1}^{K_{\mathrm{a}}} \gamma_{k, k'} = 1, \forall k' \in [K_{\mathrm{a}}] \\
		& \ \gamma_{k, k'} \in \{ 0, 1 \}, k \in [K_{\mathrm{a}}], k' \in [K_{\mathrm{a}}].
	\end{align}
\end{subequations}
We measure the distance between a channel vector and a group center by the Euclidean distance, i.e., $ d( k, k' ) = \left\| \mathbf{r}_{k} - \mathbf{c}_{k'} \right\| $ in \eqref{dis}. As the solution to the above linear programming problem, $ \mathbf{\Gamma} \in \{0, 1\}^{K_{\mathrm{a}} \times K_{\mathrm{a}}} $ is a binary matrix whose $ ( k, k' ) $-th entry $ \gamma_{k, k'} = 1 $ indicates that the $ k $-th channel belongs to the $ k' $-th group. We appeal to the famous \emph{Hungarian algorithm} \cite{K55} to get the optimal assignment. As the algorithm input, weights are stored in a cost matrix $ \mathbf{D} \in \mathbb{R}^{K_{\mathrm{a}} \times K_{\mathrm{a}}} $ with the $ ( k, k' ) $-th entry calculating the Euclidean distance between $ \mathbf{r}_{k} $ and $ \mathbf{c}_{k'} $. After grouping according to the algorithm output $ \mathbf{\Gamma} $, the update step is similar to that of $ K $-means, where each new group center is calculated as the mean of the constituent channel vectors.

We name the proposed clustering algorithm \emph{slot-balanced $ K $-means} since it performs assignment slot by slot and obtains clusters with identical numbers of constituting elements. Obviously, it requires $ S $ assignment steps to finish one round of channel partitioning concerning all transmission slots. Denote $ \mathbf{c}_{k'}^{t, s} $ the updated centroid of the $ k' $-th group at round $ t $, step $ s $. With the assignment matrix $ \mathbf{\Gamma} $ acquired from the $ s $-th assignment step, the update step goes by
\begin{align}
	\mathbf{c}_{k'}^{t, s} = \dfrac{1}{s} \left[ (s - 1) \mathbf{c}_{k'}^{t, s - 1} + \sum_{k = 1}^{K_{\mathrm{a}}} \gamma_{k, k'} \mathbf{r}_{k}^{s} \right]. \label{c1}
\end{align}
The initial centroids of each round are inherited from the final renewed center points of the former round, i.e., $ \mathbf{C}^{t, 0} = \mathbf{C}^{t - 1, S} $. As the initialization of the algorithm, $ \mathbf{C}^{0, S} $ can be generated randomly or set to be $ \mathbf{R}_{s} $ chosen randomly from any slot.

\subsection{Codeword Collision Resolution}

\begin{algorithm}[!t]
	\caption{Slot-Balanced $ K $-means for Clustering Decoding}
	\label{Kmeans}
	\begin{algorithmic}[1]
		\STATE \textbf{Input:} Data set $ \{ \mathbf{r}_{s, k} \in \mathbb{C}^{M} : s \in [ S ], k \in [ K_{s} ] \} $, maximum number of iterations $ T_{\mathrm{c}} $
		\STATE \textbf{Initialize:} Centroid locations $ \mathbf{C}^{0, S} = [ \mathbf{c}_{1}^{0, S}, \dots, \mathbf{c}_{K_{\mathrm{a}}}^{0, S} ] $
		\FOR{$ t = 1, 2, \dots, T_{\mathrm{c}} $}
		\STATE Set $ \mathbf{C}^{t, 0} = \mathbf{C}^{t - 1, S} $
		\FOR{$ s = 1, 2, \dots, S $}
		\STATE $ \forall k \in [ K_{s} ], k' \in [ K_{\mathrm{a}} ] $: Compute cost matrix $ \mathbf{D} $ with $ d( k, k' ) = \left\| \mathbf{r}_{k}^{s} - \mathbf{c}_{k'}^{t, s - 1} \right\| $
		\IF{$ K_{s} < K_{\mathrm{a}} $}
		\STATE Add $ K_{\mathrm{a}} - K_{s} $ rows with the largest sum of elements to the matrix $ \mathbf{D} $
		\ENDIF
		\STATE Solve the assignment problem \eqref{dis} by the Hungarian algorithm with output $ \mathbf{\Gamma} $
		\IF{$ K_{s} = K_{\mathrm{a}} $}
		\STATE $ \forall k' $: Update $ \mathbf{c}_{k'}^{t, s} $ via \eqref{c1}.
		\ELSE
		\STATE $ \forall k' $: Update $ \mathbf{c}_{k'}^{t, s} $ via \eqref{c2}.
		\ENDIF
		\ENDFOR
		\STATE \textbf{if} $ \mathbf{C}^{t, S} = \mathbf{C}^{t - 1, S} $, \textbf{stop}
		\ENDFOR
		\STATE \textbf{Output:} Partitioning of the data set
	\end{algorithmic}
\end{algorithm}

Codeword collisions are unavoidable to appear when a large number of active users share a common codebook with limited codewords. If at least two users choose the same codeword to send simultaneously, the equivalent channel of the reused codeword is the sum of their corresponding channels (see \eqref{xcolumn}). Fortunately, provided that these confronted users are geographically separated, their broadcast signals will undergo different scatterers with different AoA intervals to the BS. Therefore, the sparse channel based on information recovered from other slots where such a user is not involved in any codeword collisions.

In the case of codeword collision, the proposed slot-balanced $ K $-means can still work with a bit of adjustment. The number of active users is first judged to be $ K_{\mathrm{a}} = \operatorname{max} \{ K_{s}, s \in [ S ] \} $. Codeword reuse is deduced to happen at the $ s $-th slot when $ K_{s} < K_{\mathrm{a}} $. The cost matrix $ \mathbf{D} $ is first computed as a $ K_{s} \times K_{\mathrm{a}} $-dimensional matrix. Since the Hungarian algorithm only operates with a square matrix, we select $ K_{\mathrm{a}} - K_{s} $ rows of $ \mathbf{D} $ with the largest sum of elements and append them to $ \mathbf{D} $ to form a $ K_{\mathrm{a}} \times K_{\mathrm{a}} $-dimensional input matrix. The channel corresponding to each duplicated row is allocated to more than one group by the Hungarian algorithm. However, such a contaminated channel vector should not be straightly used to calculate the next centroid. We leverage the unique angular transmission pattern revealed by the center point of each cluster to counteract the interference of other conflicting users, with the update step expressed as
\begin{align}
	\mathbf{c}_{k'}^{t, s} = \dfrac{1}{s} \left[ (s - 1) \mathbf{c}_{k'}^{t, s - 1} + \sum_{k = 1}^{K_{\mathrm{a}}} \gamma_{k, k'} \mathbf{\Lambda}_{k'}^{t, s - 1} \mathbf{r}_{k}^{s} \right]. \label{c2}
\end{align}
where $ \mathbf{\Lambda}_{k'}^{s - 1} \in \{0, 1\}^{M \times M} $ is a diagonal matrix with indexes of none-zero diagonal elements denoted by $ \mathcal{A} $. The set $ \mathcal{A} $ is chosen such that the elements $ \{ c_{k' m}^{t, s - 1}: m \in \mathcal{A} \} $ concentrate most of the energy of the vector $ \mathbf{c}_{k'}^{t, s - 1} $, i.e., $ \sum_{m \in \mathcal{A}} \left| c_{k' m}^{t, s - 1} \right|^{2} > \zeta \left\| \mathbf{c}_{k'}^{t, s - 1} \right\|^{2} $ for a given threshold $ \zeta $ (e.g. $ \zeta = 0.95 $).

\subsection{Further Discussions}

We summarize the overall algorithm in Algorithm \ref{Kmeans}. Our method is a special case of the constrained $ K $-means \cite{WCR01} where channels recovered at each slot formulate couples of cannot-link constraints with each other. Same as the constrained $ K $-means, the proposed iterative clustering algorithm is guaranteed to converge. Note that even though the data assignment step and centroid update step are both optimal, the final solution often reaches a local optimum. Our method can also be treated as a revision of the balanced $ K $-means \cite{MF14} to satisfy Constraint I. Dominated by the Hungarian algorithm computed in $ \mathcal{O}( K_{\mathrm{a}}^{3} ) $ time, the algorithm complexity yields the order of $ \mathcal{O}( S K_{\mathrm{a}}^{3} ) $, which vastly outperforms the constrained $ K $-means of complexity $ \mathcal{O}( S^{3.5} K_{\mathrm{a}}^{7} ) $.

\section{Simulation Results}

In this section, we conduct numerical experiments to evaluate the performance of the proposed UCS scheme. We consider a circumstance where $ K_{\mathrm{a}} = 100 $ active users are randomly and uniformly located in a semicircular coverage area with a radius of $ 50 $ meters, while the value of $ K_{\mathrm{a}} $ is unknown to the decoder. We would like to mention that the system does not acquire the knowledge of $ K_{\mathrm{tot}} $. The number of inactive users within the URA model can be arbitrarily large, but the system performance depends only on $ K_{\mathrm{a}} $.

We generate the virtual MIMO channel by a general 3D wireless channel model \cite{WWA18}. Such a geometry-based stochastic model (GBSM) is derived from the predefined stochastic distributions of effective scatterers by applying the fundamental laws of wave propagation. We consider a non-line-of-sight (NLOS) propagation environment. There are $ 16 $ random scatterers each with $ 7.0 $ degrees angular spread in azimuth and $ 19.0 $ degrees angular spread in elevation \cite{FLE20}; they are randomly effective for a given active user. The carrier frequency is $ 2.6 $ GHz, and other parameters related to scatterers are set according to \cite[Table II]{FLE20}. Since we consider a block-fading narrow-band MIMO channel in this paper, we treat parameters in \cite[Table I]{WWA18} as time-invariant, i.e., we do not consider scenes like target movement, array-time cluster evolution, and mean power updates of rays specific to the model in \cite{WWA18}. Given the coordinates of transmitting/receiving antennas and the statistics of scatterers, the spatial domain channel $ \widetilde{\mathbf{H}}_{k} $ can be easily generated. Since GBSM captures the characteristic that MIMO channels propagate in the form of clusters of paths, the transformed angular domain channel $ \mathbf{H}_{k} $ exhibits the clustered sparsity structure as shown in Fig. \ref{channel}.

\subsection{Performance of EM-MRF-GAMP Algorithm}

We choose the measurement matrix $ \widetilde{\mathbf{A}} \in \mathbb{C}^{N \times 2^{12}} $ as an i.i.d Gaussian matrix with $ N $ the number of measurements. As the stopping criteria for the iterative algorithm, we set $ T_{\mathrm{max}} = 50 $, $ T_{\mathrm{mrf}} = 20 $, and the precision tolerance $ \tau = 10^{-5} $. We assess the algorithm in the aspect of CE accuracy by the NMSE of the recovered active channels, i.e., $ \text{NMSE} = \| \mathbf{X}_{\mathrm{o}} - \underline{\mathbf{X}} \|_{F}^{2} / \| \mathbf{X}_{\mathrm{o}} \|_{F}^{2} $ with $ \mathbf{X}_{\mathrm{o}} $ the original channel matrix arranged by the indexes in $ \mathcal{X} $ (c.f. \eqref{L}). Note that the AD performance of EM-MRF-GAMP is integrated into the systematic error for consideration.

We consider several algorithms from the Bayesian family for comparison: 1) \textbf{GAMP-Laplace} \cite{BSY19}\textbf{:} a GAMP-based algorithm with a Bernoulli-Laplacian prior on $ x $; 2) \textbf{MMV-GAMP:} based on \cite{BSY19}, the multiple measurement vector (MMV) setting \cite{ZS13} is introduced to capture the row sparsity of $ \mathbf{X} $; 3) \textbf{CB-CS+LMMSE} \cite{FJC21'2}\textbf{:} the covariance-based CS (CB-CS) estimator in \cite{FHJ19} first reconstructs the LSFCs of channels for AD, and the linear minimum mean-square error (LMMSE) estimation is then performed on active codewords for CE. We depict the average NMSE performance of the aforementioned methods versus the SNR in \mbox{Fig. \ref{NMSEa}}. It can be seen that the proposed EM-MRF-GAMP algorithm significantly outperforms other approaches since it well captures the clustered support structure of the sparse angular domain channel. \mbox{Fig. \ref{NMSEb}} exhibits the NMSE performance as a function of the number of measurements (i.e, coherent block-length). For instance, at target $ \text{NMSE} = -20\text{dB} $ when $ \text{SNR} = 10 \text{dB} $, the EM-MRF-GAMP algorithm requires about $ 120 $ measurements, while MMV-GAMP needs more than $ 160 $ measurements. As CE is key to the clustering decoder for message stitching, in order to reach the same level of decoding error probability, the spectral efficiency of the UCS scheme where EM-MRF-GAMP acts as the CS decoder is certainly higher than that of the UCS scheme with the MMV-GAMP based CS decoder.

\begin{figure}[!t]
	\centering
	\subfigure[]{\includegraphics[height=5.5cm]{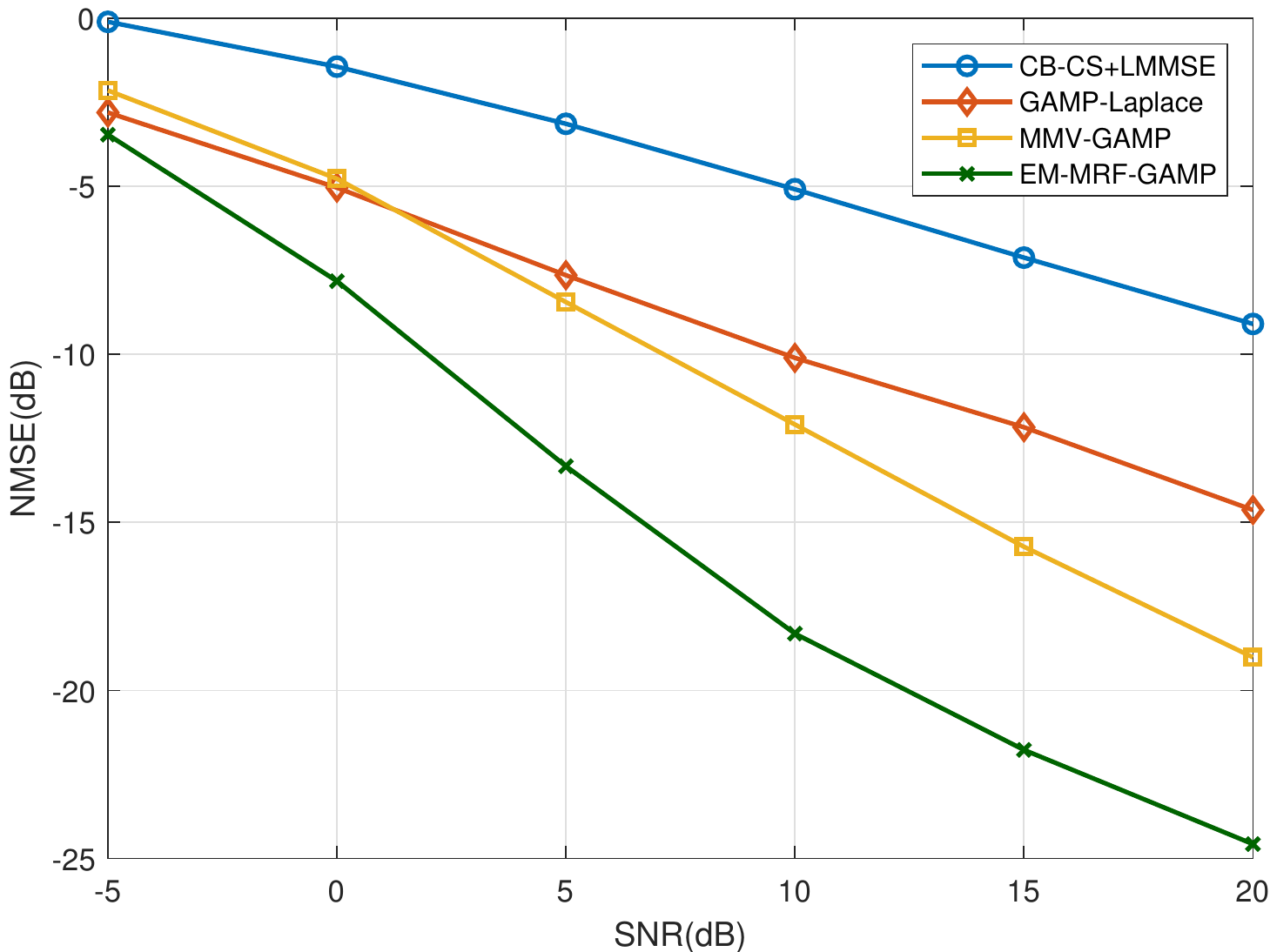}
	\label{NMSEa}}
	\hspace{0.15in}
	\subfigure[]{\includegraphics[height=5.5cm]{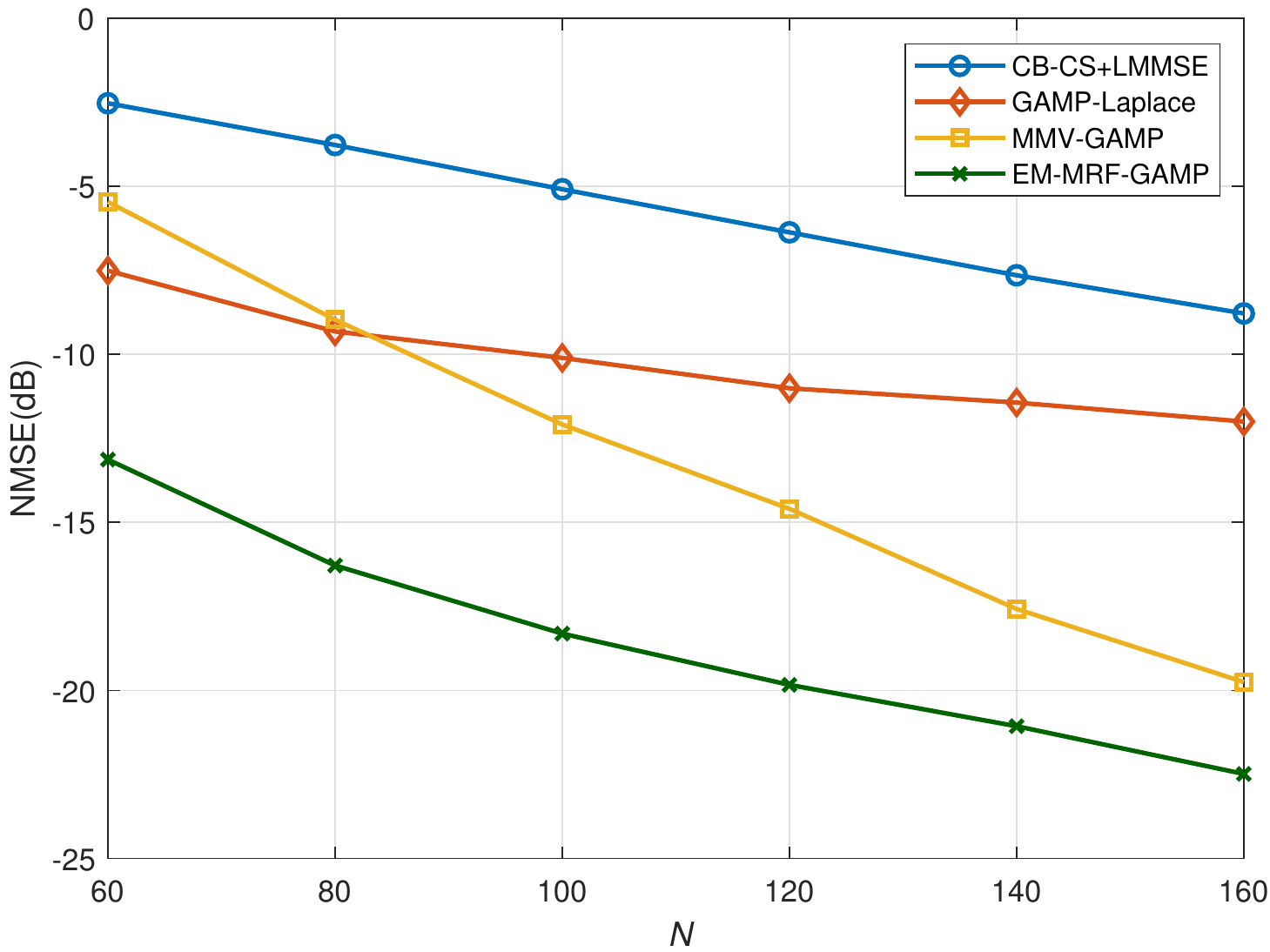}
	\label{NMSEb}}
	\caption{The NMSEs of various algorithms. $ K_{\mathrm{a}} = 100 $, $ M_{\mathrm{v}} = 4 $, $ M_{\mathrm{h}} = 25 $ (i.e., $ M = 4 \times 25 = 100 $). a) NMSEs versus the SNR when $ N = 100 $. b) NMSEs versus the number of measurements when $ \text{SNR} = 10 \text{dB} $.}
	\label{NMSE}
	%\vspace{-10pt}
\end{figure}

\begin{figure}[!t]
	\centering
	\subfigure[]{\includegraphics[height=5.5cm]{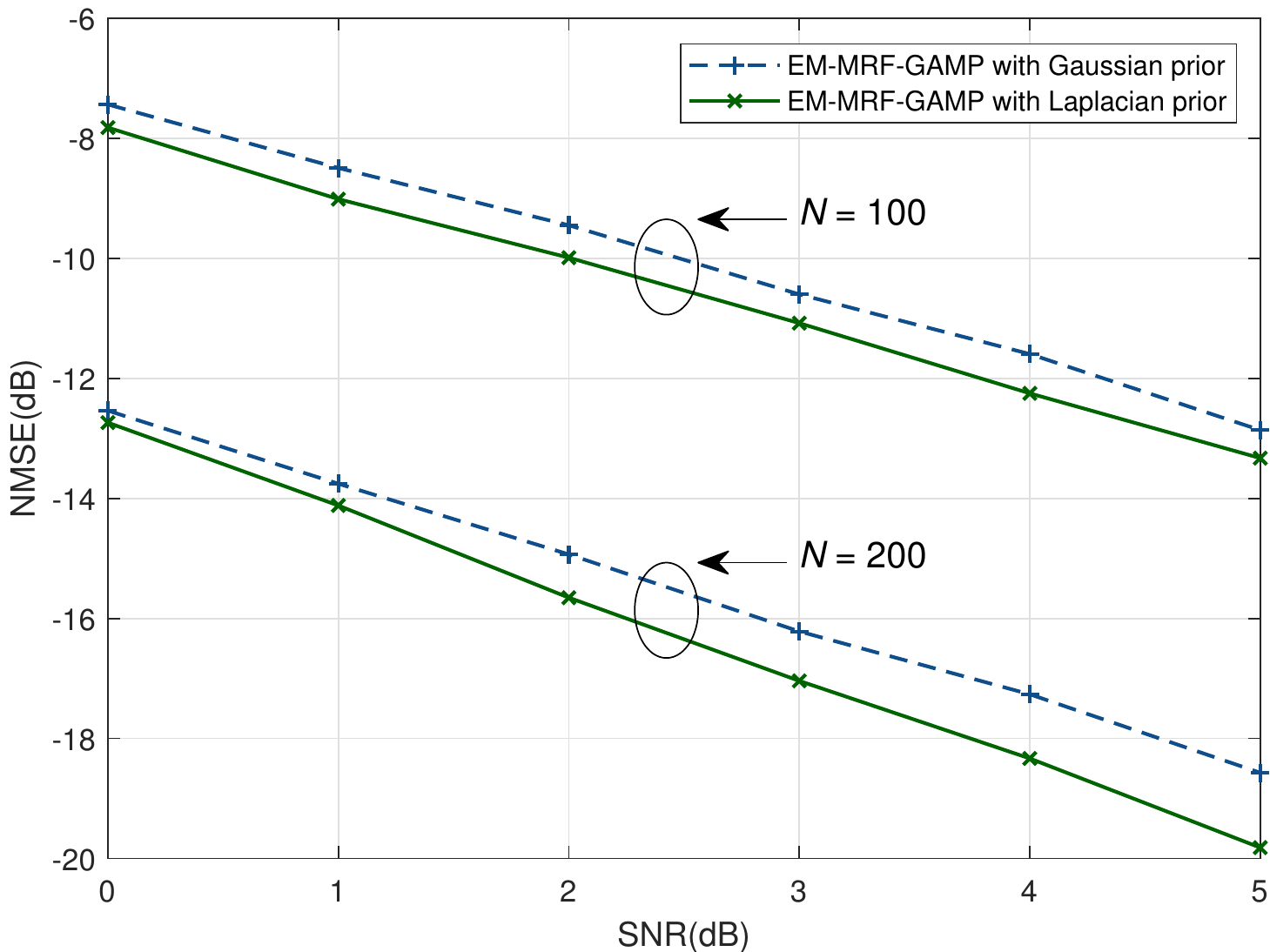}
	\label{NMSE1a}}
	\hspace{0.15in}
	\subfigure[]{\includegraphics[height=5.5cm]{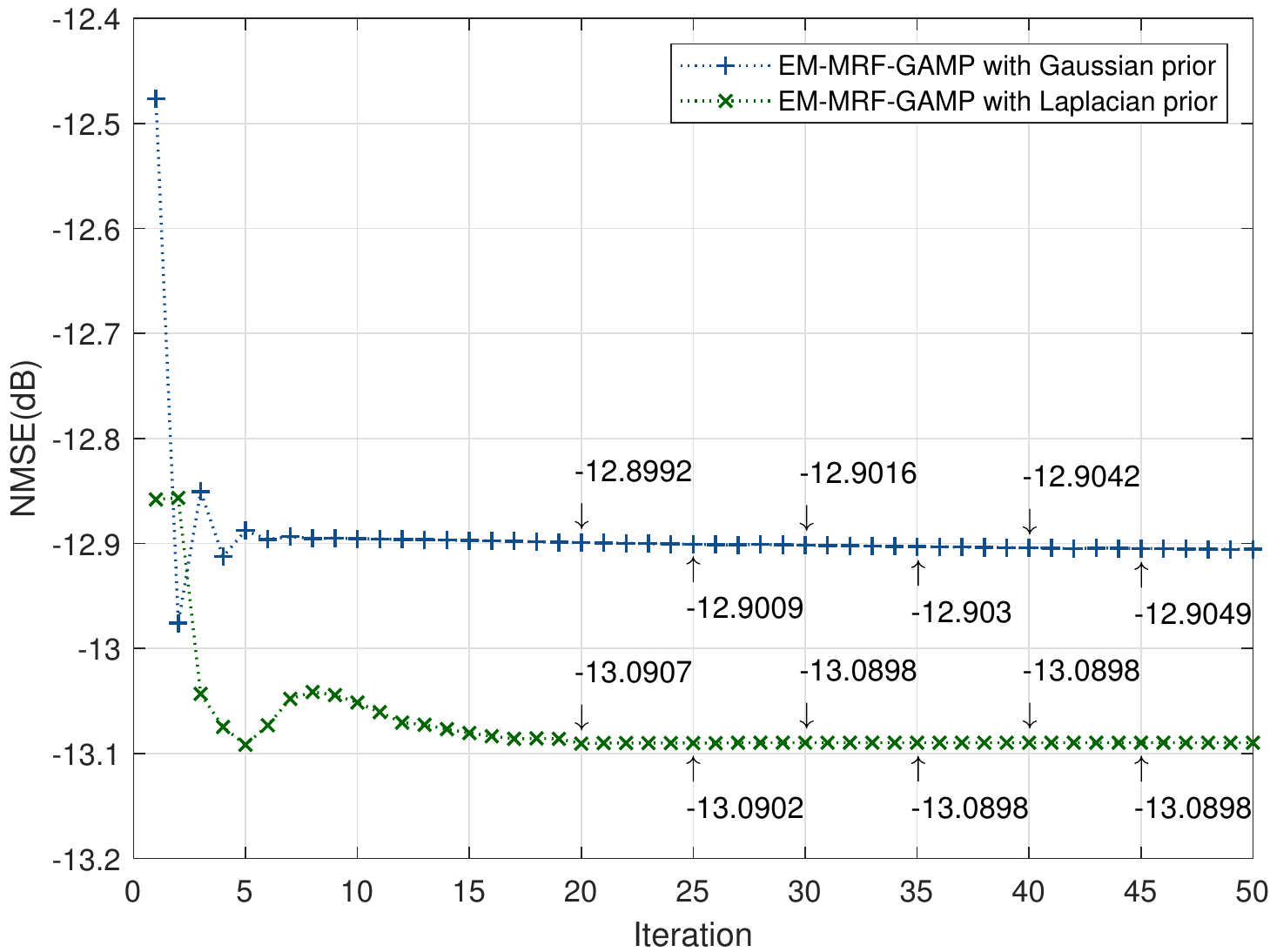}
	\label{NMSE1b}}
	\caption{The NMSEs of MRF-GAMP based algorithms with different priors. $ K_{\mathrm{a}} = 100 $ and $ M = 100 $. a) NMSEs versus the SNR under different numbers of measurements. b) NMSEs versus the iteration number when $ N = 100 $ and $ \text{SNR} = 5 $dB. Precision tolerance is set to be $ \tau = -1 $, i.e., algorithms operate until the maximum number of iterations $ T_{\mathrm{max}} = 50 $ is reached.}
	\label{NMSE1}
\end{figure}

Furthermore, we employ a Bernoulli-Gaussian distributed variable to model the sparse signal $ x $ in EM-MRF-GAMP. In \mbox{Fig. \ref{NMSE1a}}, we see that the EM-MRF-GAMP algorithm with Laplacian prior offers performance gains over EM-MRF-GAMP with Gaussian prior, which is in line with the conclusion drawn in \cite{BSY19} that the Laplacian distribution is more suitable to model the angular domain channel than the Gaussian (mixture) distribution. We also plot in \mbox{Fig. \ref{NMSE1b}} the NMSE performance of both algorithms as a function of iterations. We observe that the algorithm with Laplacian prior converges much faster than the one with Gaussian prior: the former converges after approximately $ 27 $ iterations, while the latter still slightly diverges within $ 50 $ iterations. It is another strength brought by precisely modeling angular domain channel coefficients.

\subsection{Performance of Uncoupled Compressed Sensing Scheme}

Now we examine performance of the proposed UCS scheme with EM-MRF-GAMP as the CS decoder and the slot-balanced $ K $-means assisted clustering decoder. Each $ 96 $-bit user message is divided into fragments of length $ J = 12 $ to send over $ S = 8 $ slots. The total number $ 2^{J} $ of codewords in the common codebook is chosen such that the codeword collision probability is relatively low, meanwhile, the complexity of the EM-MRF-GAMP algorithm is computationally manageable. In URA, the error event probability is defined in the forms of the per-active-user probability of misdetection and the probability of false-alarm, in turn expressed as
\begin{align}
	P_{\text{md}} &= \dfrac{1}{K_{\mathrm{a}}} \sum_{k \in \mathcal{K}_{\mathrm{a}}} p \left( m(k) \notin \mathcal{L} \right), \quad P_{\mathrm{fa}} = \dfrac{\left| \mathcal{L} \backslash \left\lbrace m(k):k \in \mathcal{K}_{\mathrm{a}} \right\rbrace \right| }{\left| \mathcal{L} \right| } \label{Pe}
\end{align}
where $ m(k) $ is a message sequence in the recovered message list $ \mathcal{L} $. The total error rate is counted as the sum of the above error probabilities, i.e., $ P_{\mathrm{e}} = P_{\text{md}} + P_{\text{fa}} $.

\begin{figure}[h]
	\centering
	\subfigure[$ M = 100 $]{\includegraphics[height=5.5cm]{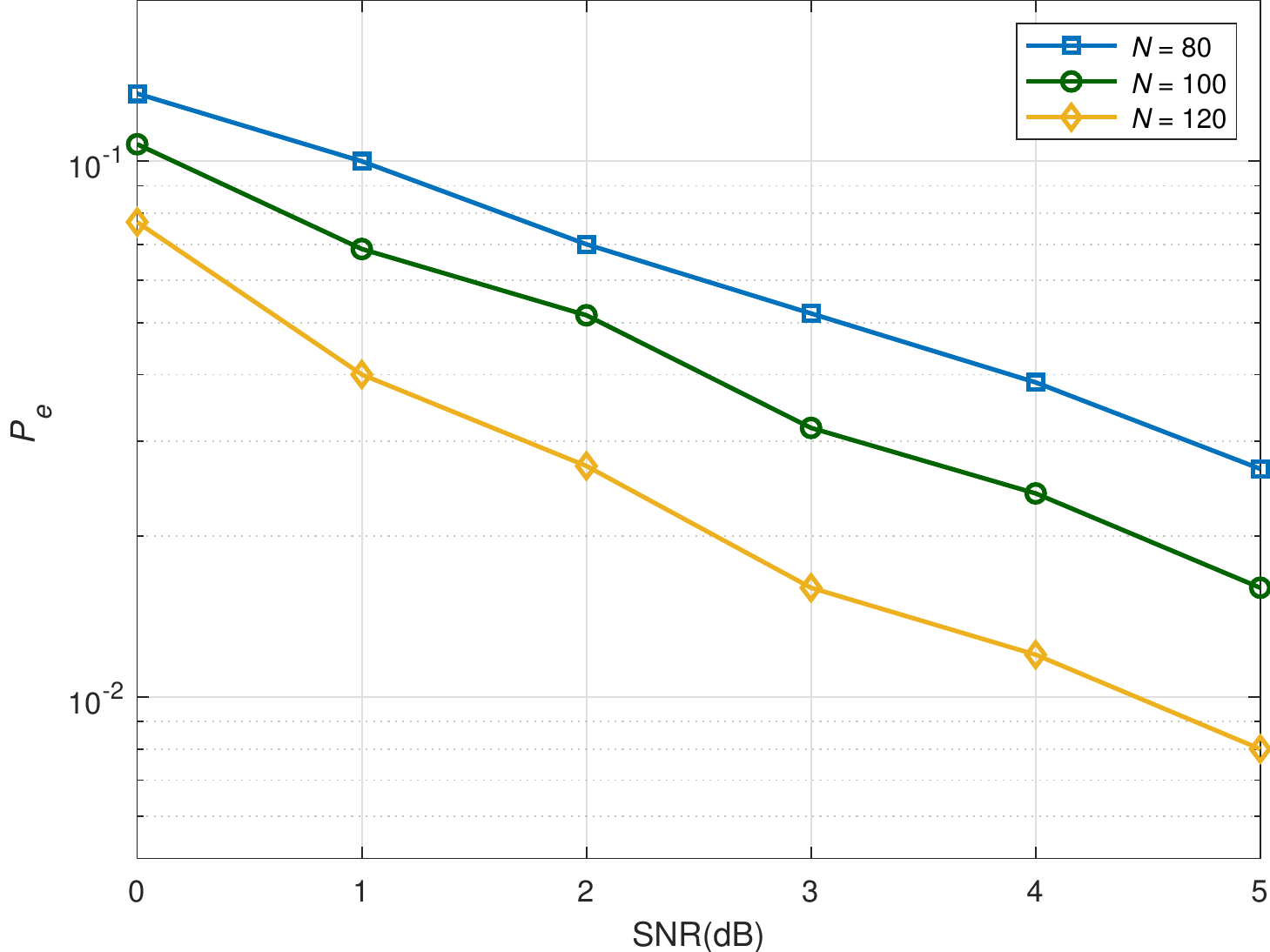}
	\label{simu1a}}
	\hspace{0.15in}
	\subfigure[$ N = 100 $]{\includegraphics[height=5.5cm]{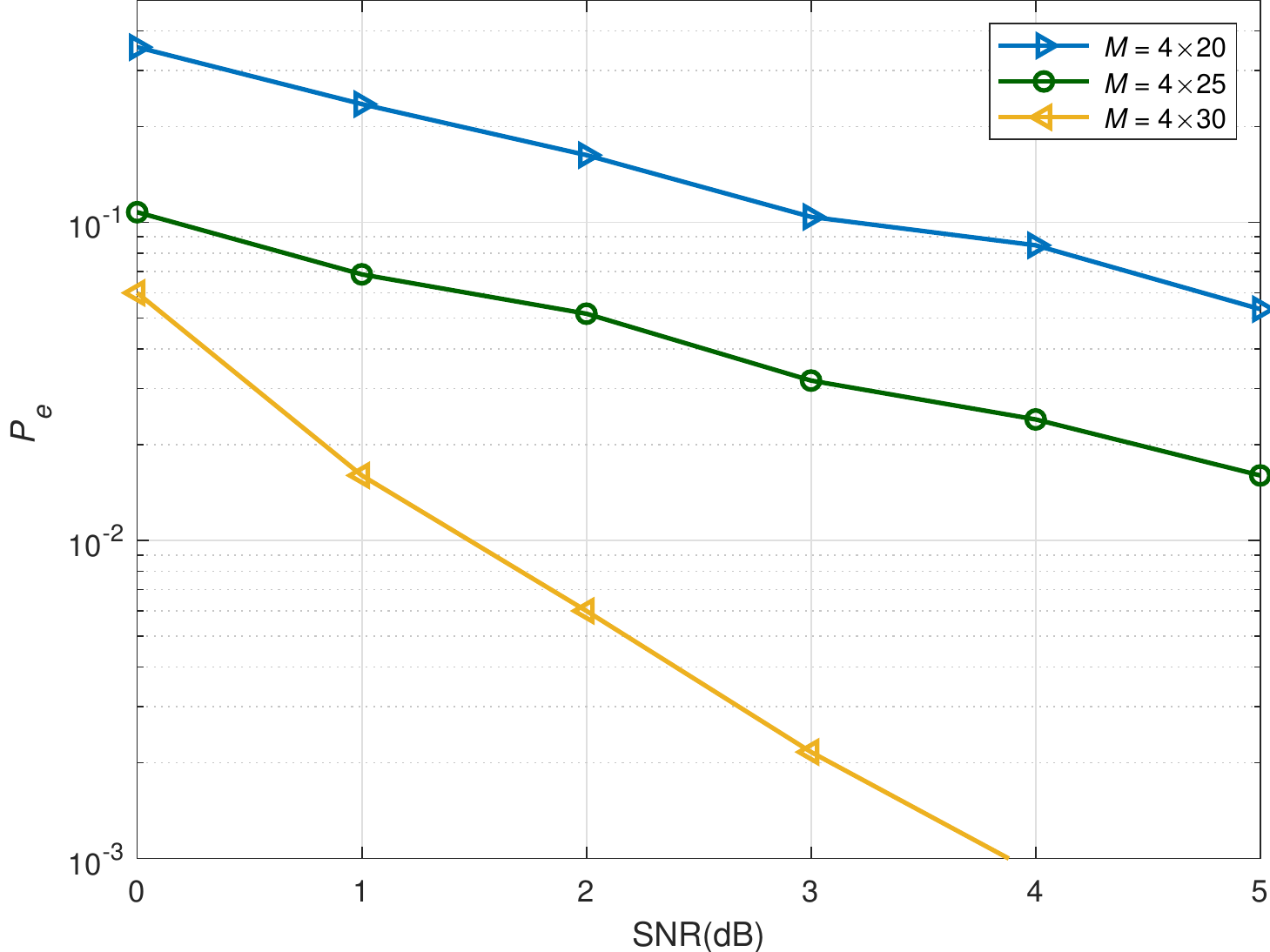}
	\label{simu1b}}
	\caption{Error probabilities as a function of SNR with different values of block-length $ N $ and numbers of antennas $ M $: (a): Fix $ M = 4 \times 25 $ and $ K_{\mathrm{a}} = 100 $, $ N $ varies from $ 80 $ to $ 120 $; (b): Fix $ N = 100 $ and $ K_{\mathrm{a}} = 100 $, $ M $ varies from $ 4 \times 20 $ to $ 4 \times 30 $.}
	\label{simu1}
\end{figure}

\mbox{Fig. \ref{simu1}} demonstrates the error rate of the introduced UCS scheme with different selections of $ M $ and $ N $. It suggests that the system error probability can be decreased by increasing the number of receiving antennas or the coherent block-length. It can be observed in \mbox{Fig. \ref{simu1a}} that if the block-length is shortened by $ 20 $ signal dimensions, one only needs to pay a price of $ 1.0 \sim 1.3 $dB in SNR to achieve the target $ P_{\mathrm{e}} = 0.05 $, as EM-GAMP-MRF is robust to the number of measurements (c.f. \mbox{Fig. \ref{NMSEb}}). \mbox{Fig. \ref{simu1b}} reveals that the error rate improves rapidly when the number of receive antennas is increased. It is owing to the higher resolution offered by more antennas, which provides more dimensional information for measuring channel similarity/difference. Thus, users can be easily distinguished in the angular domain. The total spectral efficiency of the proposed UCS scheme is $ \Psi = \frac{B K_{\mathrm{a}}}{S N} = 12 $ bits per channel use.

The available works of URA in the MIMO scenario \cite{FHJ19, FHJ21, SBM21} all consider i.i.d. MIMO channels. In particular, the CB-CS decoder in \cite{FHJ19, FHJ21} relies highly on the i.i.d. assumption to ensure that the covariance of $ \widetilde{\mathbf{X}} $ is an approximate diagonal matrix. In order to compare with the aforementioned schemes under realistic correlated channels, we put forward the following modified schemes.
\begin{enumerate}[leftmargin=*]
	\item
	\textbf{CCS with CB-CS under correlated channels:} The work of \cite{XWG20} attempts to alleviate the correlation at the transmitting/receiving antenna side to allow the CB-CS recovery method to work under correlated channels. The channel in the transformation domain considered in \cite{XWG20} is approximately independent only when there are rich scatterers between users and the BS. One can refer to \cite{XWG20} for the specific transmission framework design and settings.
	\item
	\textbf{UCS with correlation-aware clustering decoder:} The clustering decoder devised in \cite{SBM21} captures the strong correlation between slot-wise channels of each active user for message stitching. We design a similar correlation-aware clustering decoder in our proposed UCS regime by measuring the distance between the channel vector and the group center based on their correlation, i.e., $ d( k, k' ) = 1 - \frac{\left\langle \mathbf{r}_{k}, \mathbf{c}_{k'} \right\rangle}{\sqrt{\left\langle \mathbf{r}_{k}, \mathbf{r}_{k} \right\rangle \left\langle \mathbf{c}_{k'}, \mathbf{c}_{k'} \right\rangle}} $
	in \eqref{dis}, with $ \left\langle \mathbf{r}, \mathbf{c} \right\rangle = \mathbf{r}^{H} \mathbf{c} $ the Euclidean scalar product. Other system settings are the same as the proposed UCS scheme.
\end{enumerate}
We also provide several intuitive URA schemes for comparison:
\begin{enumerate}[start=3,leftmargin=*]
	\item
	\textbf{CCS with MMV-AMP:} Under the CCS framework, the message is divided into $ 32 $ blocks of size $ J = 12 $ based on the data profile $ \{ 12, 3, 3, \dots, 3, 0, 0, 0 \} $. We apply the MMV-AMP algorithm \cite{LY18} for AD under spatial domain channels, then the tree decoder reconstructs the message list. Such a scheme can be viewed as the MIMO extension of \cite{FJC21}.
	\item
	\textbf{CCS with EM-MRF-GAMP:} Under the CCS framework, we split the message into $ 20 $ blocks of size $ J = 12 $ based on the data profile $ \{ 12, 5, \dots, 5, 4, 0, 0 \} $. Under the angular domain channel, EM-MRF-GAMP acts as the CS decoder for AD.
\end{enumerate}

\begin{figure}[h]
	\centering
	\includegraphics[height=6cm]{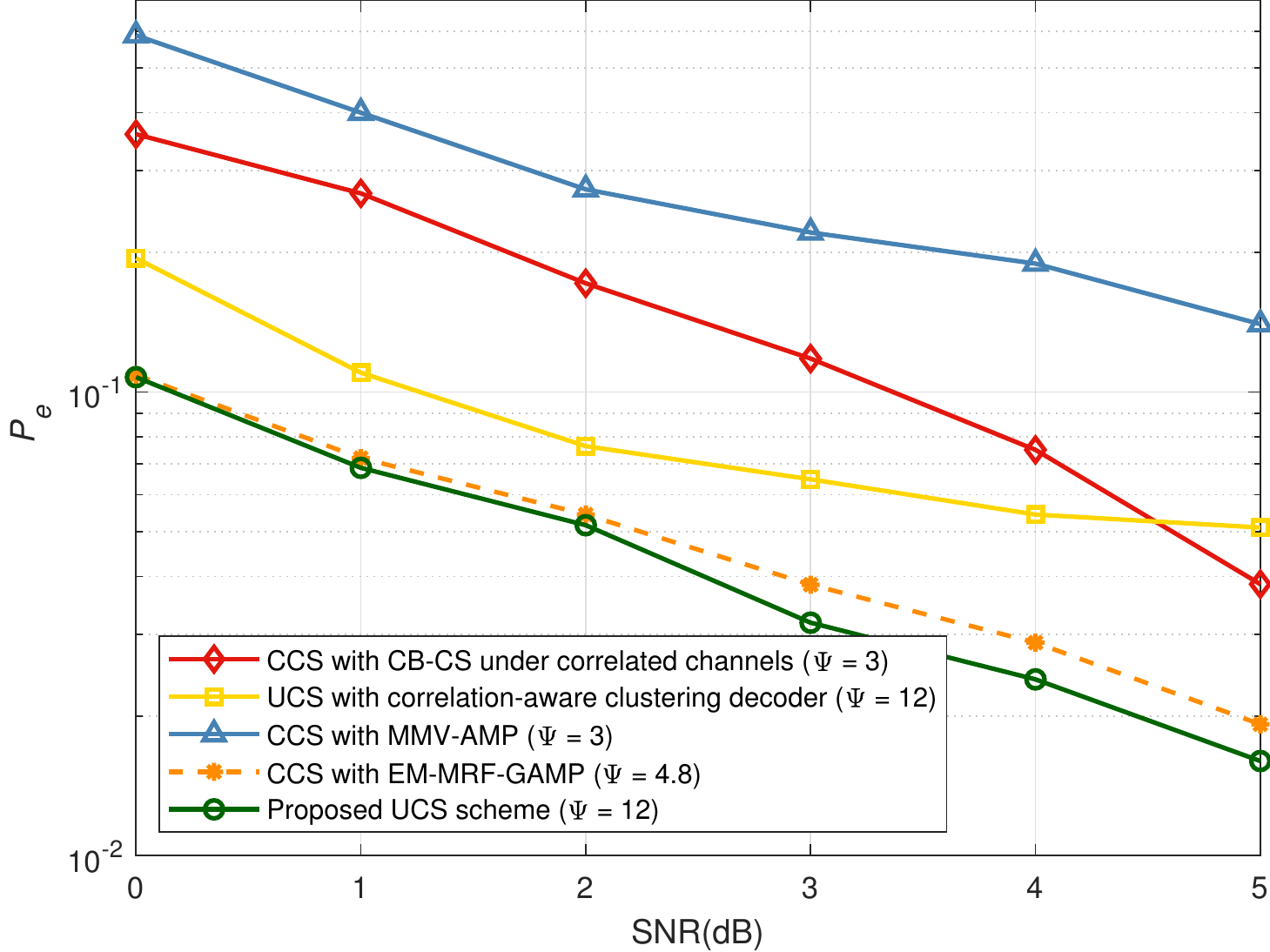}
	\caption{Error probabilities of many schemes as a function of SNR with $ K_{\mathrm{a}}=100 $, $ N = 100 $, and $ M = 100 $.}
	\label{simu2}
\end{figure}

We depict the performance of various URA schemes in \mbox{Fig. \ref{simu2}} as a function of the SNR. Among UCS schemes for URA, the proposed UCS scheme outperforms the one with a correlation-aware clustering decoder. The latter only adopts small-scale fading coefficients for clustering, while we take both large-scale and small-scale fading coefficients into account by the Euclidean distance. It can be seen in \mbox{Fig. \ref{simu2}} that the performance of the CB-CS decoder under correlated channels is not ideal, as the correlation between users is heightened due to the limited number of scatterers. The CCS scheme with MMV-AMP also performs poorly since the MMV-AMP algorithm fails to precisely recover the spatial domain channel with a limited number of measurements, resulting in a high error rate of codeword AD.

It is evident that on the basis of the same AD and CE results offered by the EM-MRF-GAMP based CS decoder, the CCS scheme with a tree-based decoder can ultimately achieve a lower error rate of message stitching than the UCS scheme with a clustering-based decoder by appending many parity check bits. But meanwhile, the corresponding coding rate and spectral efficiency are reduced. We find in \mbox{Fig. \ref{simu2}} that to approach the error rate of the proposed UCS scheme, the CCS regime manifests a spectral efficiency of $ 4.8 $ bits per channel use, which is relatively low compared to that of UCS ($ 12 $ bits per channel use). In general, the proposed uncoupled scheme achieves a low error rate at a high spectral efficiency, which makes it suitable for the massive access scenario.

\begin{figure}[h]
	\centering
	\includegraphics[height=6cm]{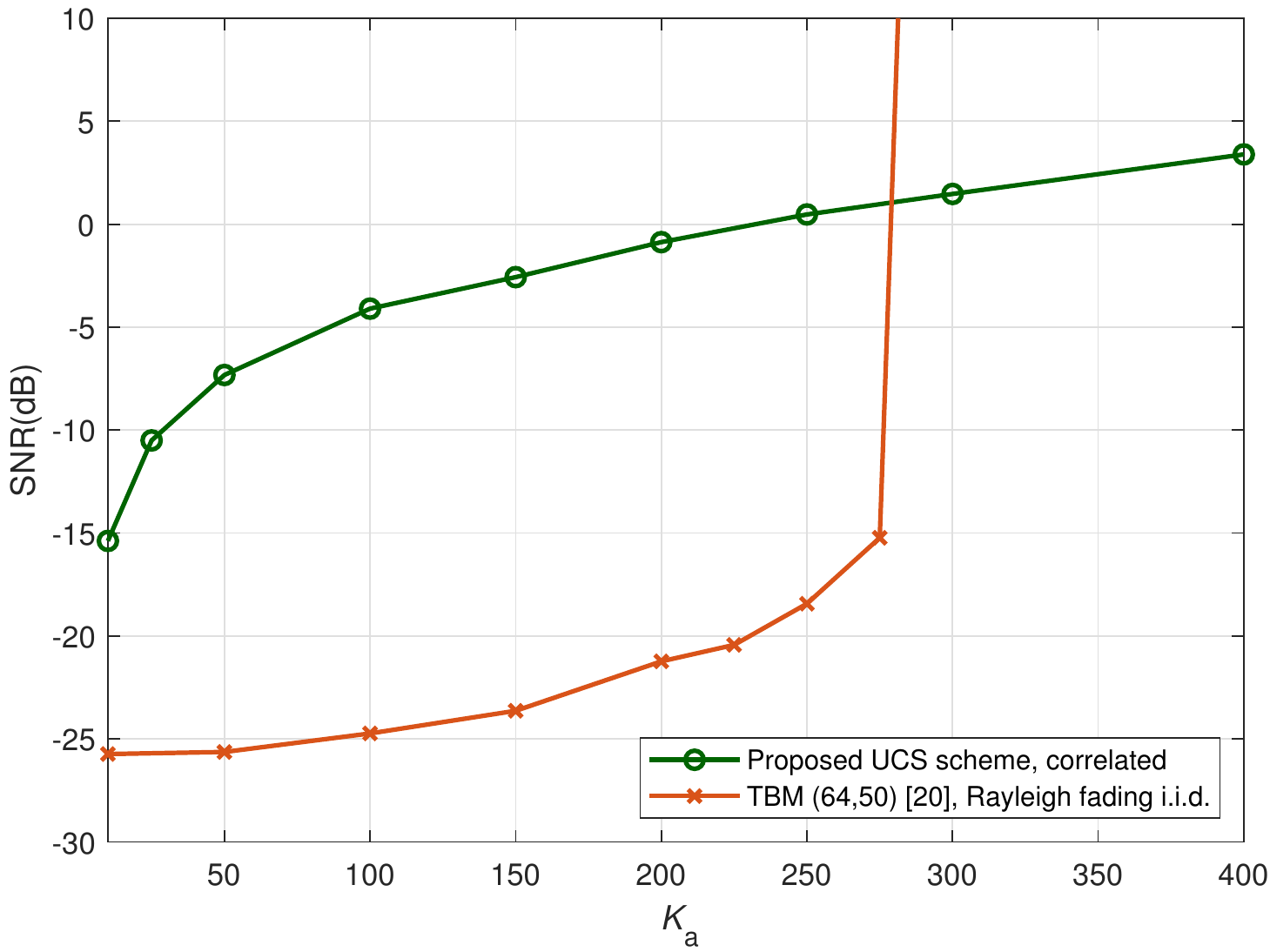}
	\caption{Minimum SNR required to achieve $ P_{\mathrm{md}} \leq 0.1 $ with different values of $ K_{\mathrm{a}} $. $ N_{\mathrm{tot}} = 3200 $ and $ M = 50 $.}
	\label{TBM}
\end{figure}

We also compare the proposed UCS scheme under correlated channels with the tensor-based URA scheme \cite{DLG20} under Rayleigh fading i.i.d. channels. The total block-length for the transmission of $ B = 96 $ bits is $ N_{\mathrm{tot}} = 3200 $. For UCS, messages are sent using $ S = 8 $ slots with $ J = 12 $ and $ N = 400 $. One can refer to \cite{DLG20} for detailed settings of the tensor-based URA scheme with tensor size $ (64,50) $. Focusing on the probability of error $ P_{\mathrm{md}} $ defined in \eqref{Pe}, we depict the SNR required to achieve $ P_{\mathrm{md}} \leq 0.1 $ in \mbox{Fig. \ref{TBM}} with $ M = 50 $. As can be seen from \mbox{Fig. \ref{TBM}}, the tensor-based scheme possesses better energy efficiency, while the proposed UCS scheme supports more potential active users. The tensor-based URA scheme relies on a rank-$ K_{\mathrm{a}} $ tensor decomposition to separate different users' signals. The Kruskal's condition \cite{COV14} for the uniqueness of decomposing a rank-$ K_{\mathrm{a}} $ tensor states that $ K_{\mathrm{a}} $ is positively correlated with the rank of the matrix of active channels. Since correlated channels are of low rank, the tensor-based scheme will support even less active users under correlated channels than under i.i.d. channels.

\section{Conclusion}

URA is a novel paradigm for massive connectivity. We show that by exploiting the rich dimensionality of the sparse angular domain MIMO channel, an uncoupled slotted data transmission can be adopted for URA. We first explore the EM-MRF-GAMP algorithm to retrieve transmitted message sequences and the corresponding channels slot by slot. Afterwards, the similarity of the angular transmission pattern implied in the slot-wise reconstructed channels enables us to design a clustering-based decoder to combine message sequences across slots. We employ the slot-balanced $ K $-means method for message stitching as a constrained assignment problem. Finally, we perform simulation to validate that the presented transmission scheme is reliable with a low error rate in a high spectral efficiency region. %As a high efficiency and storage-saving scheme, multiple antenna assisted URA is another prospect for future mMTC applications.

\begin{appendices}
	
\section{A Note on Angular Domain Transformation}

Considering the $ m_{\mathrm{v}} $-th entry of vector $ \mathbf{u}_{\mathrm{v}} \triangleq \mathbf{U}_{\mathrm{v}}^{H} \mathbf{e}_{\mathrm{v}} \left( \Omega_{k, l}^{\mathrm{v}} \right) $ in \eqref{DFTtrans}, whose magnitude can be calculated as \cite{BSY19}
\begin{align}
	\left| \mathbf{u}_{\mathrm{v}}( m_{\mathrm{v}} ) \right| = \frac{1}{\sqrt{M_{\mathrm{v}}}} \left| \frac{\sin \left( \pi \left[ m_{\mathrm{v}} - 1 - M_{\mathrm{v}} \Omega_{k, l}^{\mathrm{v}} \right] \right)}{\sin \left( \frac{\pi }{M_{\mathrm{v}}} \left[ m_{\mathrm{v}} - 1 - M_{\mathrm{v}} \Omega_{k, l}^{\mathrm{v}} \right] \right)} \right|.
\end{align}
It can be seen that $ | \mathbf{u}_{\mathrm{v}}( m_{\mathrm{v}} ) | $ is maximal for $ \tilde{m}_{\mathrm{v}} $ satisfying
\begin{align}
	\left| \cos ( \phi_{k, l} ) - \dfrac {\tilde{m}_{\mathrm{v}} - 1}{\Delta M_{\mathrm{v}}} \right| < \frac{1}{\Delta M_{\mathrm{v}}}. \label{v1}
\end{align}
Similarly, the magnitude of the $ m_{\mathrm{h}} $-th entry of vector $ \mathbf{u}_{\mathrm{h}} \triangleq \mathbf{U}_{\mathrm{h}}^{H} \mathbf{e}_{\mathrm{h}} \left( \Omega_{k, l}^{\mathrm{h}} \right) $ in \eqref{DFTtrans} is maximal for $ \tilde{m}_{\mathrm{h}} $ satisfying
\begin{align}
	\left| \sin ( \phi_{k, l} ) \cos ( \varphi_{k, l} ) - \dfrac {\tilde{m}_{\mathrm{h}} - 1}{\Delta M_{\mathrm{h}}} \right| < \frac{1}{\Delta M_{\mathrm{h}}}. \label{v2}
\end{align}
Recall \eqref{DFTtrans}, we have that the $ ( m_{\mathrm{v}}, m_{\mathrm{h}} ) $-th element of the angular domain channel $ \mathbf{H} $ has a significant magnitude if there exists a path whose elevation AoA and horizontal AoA verify \eqref{v1} and \eqref{v2} simultaneously.

\section{Calculations of Message Passing Components}

\subsubsection{Message Passing Over $ x_{j m} \to b_{j' m} $}

The message from $ g_{j m} $ to $ b_{j' m} $ can be expressed as
\begin{align}
	\nu_{g_{j m} \to b_{j' m}} &= \dfrac{1}{I_{r}} \int_{x_{j m}} p ( x_{j m} | b_{j' m} ) \mathcal{N} ( x_{j m} ; \widehat{r}_{j m}, \mu_{j m}^{r} ) \notag \\
	&= \varpi_{j m} \delta( b_{j' m} - 1 ) + ( 1 - \varpi_{j m} ) \delta( b_{j' m} + 1 ).
\end{align}
For simplicity, we ignore the subscripts of variables in the following derivations. In the above equation, $ \varpi $ and the normalization constant $ I_{r} $ are respectively given by
\begin{align}
	I_{r} &= \int_{b} \int_{x} p ( x | b ) \mathcal{N} ( x ; \widehat{r}, \mu^{r} ) = \mathcal{N} ( 0 ; \widehat{r}, \mu^{r} ) + \int_{x} \dfrac{\lambda}{2} \exp ( - \lambda | x | ) \mathcal{N} ( x ; \widehat{r}, \mu^{r} )
\end{align}
and
\begin{equation}
	\varpi = \dfrac{1}{I_{r}} \int_{x} \dfrac{\lambda}{2} \exp ( - \lambda | x | ) \mathcal{N} ( x ; \widehat{r}, \mu^{r} ). \label{pii}
\end{equation}
Facing the absolute value within the term $ \psi( x ) \triangleq \frac{\lambda}{2} \exp ( - \lambda | x | ) \mathcal{N} ( x ; \widehat{r}, \mu^{r} ) $, we consider two cases: $ x < 0 $, $ x > 0 $, respectively. For $ x < 0 $, we have
\begin{align}
	\psi( x ) &= \dfrac{\lambda}{2} \exp \left( \lambda x \right) \cdot \dfrac{1}{\sqrt{2 \pi \mu^{r}}} \exp \left( - \dfrac{( x - \widehat{r} )^2}{2 \mu^{r}} \right) = \dfrac{\lambda}{2} \cdot \dfrac{1}{\sqrt{2 \pi \mu^{r}}} \exp \left( - \dfrac{x^2 - 2 \widehat{r} x + \widehat{r}^2 - 2 \lambda \mu^{r} x}{2 \mu^{r}} \right) \notag \\
	&= \dfrac{\lambda}{2} \cdot \dfrac{1}{\sqrt{2 \pi \mu^{r}}} \exp \left( - \dfrac{\left[ x^2 \!-\! ( \widehat{r} \!+\! \lambda \mu^{r} ) \right]^2 \!-\! ( \lambda \mu^{r} )^2 \!-\! 2 \lambda \widehat{r} \mu^{r}}{2 \mu^{r}} \right) \notag \\
	&= \dfrac{\lambda}{2} \exp \left( \dfrac{1}{2} \lambda^2 \mu^{r} + \lambda \widehat{r} \right) \mathcal{N} ( x ; \widehat{r}^{-}, \mu^{r} )
\end{align}
where $ \widehat{r}^{-} = \widehat{r} + \lambda \mu^{r} $. Similarly, for $ x > 0 $, we have
\begin{align}
	\psi( x ) = \dfrac{\lambda}{2} \exp \left( \dfrac{1}{2} \lambda^2 \mu^{r} - \lambda \widehat{r} \right) \mathcal{N} ( x ; \widehat{r}^{+}, \mu^{r} )
\end{align}
where $ \widehat{r}^{+} = \widehat{r} - \lambda \mu^{r} $. The integral of $ \psi( x ) $ on $ x $ is also computed under two conditions as
\begin{align}
	\label{Ip}
	I_{x}^{-} &= \dfrac{\lambda}{2} \exp \left( \dfrac{1}{2} \lambda^2 \mu^{r} + \lambda \widehat{r} \right) \int_{- \infty}^{0} \mathcal{N} ( x ; \widehat{r}^{-}, \mu^{r} ) \mathrm{d} x = \dfrac{\lambda}{2} \exp \left( \dfrac{1}{2} \lambda^2 \mu^{r} + \lambda \widehat{r} \right) \Phi_{\mathcal{N}} \left( \dfrac{-\widehat{r}^{-}}{\sqrt{\mu^{r}}} \right) \\
	\label{Im}
	I_{x}^{+} &= \dfrac{\lambda}{2} \exp \left( \dfrac{1}{2} \lambda^2 \mu^{r} - \lambda \widehat{r} \right) \int_{0}^{\infty} \mathcal{N} ( x ; \widehat{r}^{+}, \mu^{r} ) \mathrm{d} x = \dfrac{\lambda}{2} \exp \left( \dfrac{1}{2} \lambda^2 \mu^{r} - \lambda \widehat{r} \right) \Phi_{\mathcal{N}} \left( \dfrac{\widehat{r}^{+}}{\sqrt{\mu^{r}}} \right)
\end{align}
followed by
\begin{align}
	I_{r} = \mathcal{N} ( 0 ; \widehat{r}, \mu^{r} ) + \left( I_{x}^{-} + I_{x}^{+} \right). \label{Ir}
\end{align}
Plugging \eqref{Ip}, \eqref{Im}, and \eqref{Ir} into \eqref{pii}, we have the closed form of $ \varpi $ expressed in \eqref{pi}.

\subsubsection{Message Updates of Edge/Corner Variable Nodes}

Apart from factor node $ \eta_{j' 1}^{\alpha} $ and the two coupled factor nodes $ g_{j_{\mathrm{re}} 1} $ and $ g_{j_{\mathrm{im}} 1} $, variable node $ b_{j' 1} $ at the corner of the MRF structure receive messages from factor nodes $ \eta_{j', 1, M_{\mathrm{v}} + 1} $ and $ \eta_{j', 1, 2} $ in two directions:
\begin{align}
	\nu_{j' 1}^{d} &= \kappa_{j' 1}^{d} \delta ( b_{j' 1} - 1 ) + \left( 1 - \kappa_{j' 1}^{d} \right) \delta ( b_{j' 1} + 1 )
\end{align}
where $ d \in \{\mathrm{r}, \mathrm{b}\} $, and
\begin{align}
	\kappa_{j' 1}^{\mathrm{r}} &= \tfrac{\varpi_{j_{\mathrm{re}} m_{\mathrm{r}} } \varpi_{j_{\mathrm{im}} m_{\mathrm{r}}} \prod_{k \in \{\mathrm{r,b}\}} \kappa_{j' m_{\mathrm{r}}}^{k} e^{- \alpha_{j'} + \beta_{j'}} + \left( 1 - \varpi_{j_{\mathrm{re}} m_{\mathrm{r}}} \right) \left( 1 - \varpi_{j_{\mathrm{im}} m_{\mathrm{r}}} \right) \prod_{k \in \{\mathrm{r,b}\}} \left( 1 - \kappa_{j' m_{\mathrm{r}}}^{k} \right) e^{\alpha_{j'} - \beta_{j'}}}{\left( e^{\beta_{j'}} + e^{- \beta_{j'}} \right) \left( \varpi_{j_{\mathrm{re}} m_{\mathrm{r}}} \varpi_{j_{\mathrm{im}} m_{\mathrm{r}}} \prod_{k \in \{\mathrm{r,b}\}} \kappa_{j' m_{\mathrm{r}}}^{k} e^{- \alpha_{j'}} + \left( 1 - \varpi_{j_{\mathrm{re}} m_{\mathrm{r}}} \right) \left( 1 - \varpi_{j_{\mathrm{im}} m_{\mathrm{r}}} \right) \prod_{k \in \{\mathrm{r,b}\}} \left( 1 - \kappa_{j' m_{\mathrm{r}}}^{k} \right) e^{\alpha_{j'}} \right)} \\
	\kappa_{j' 1}^{\mathrm{b}} &= \tfrac{\varpi_{j_{\mathrm{re}} m_{\mathrm{b}} } \varpi_{j_{\mathrm{im}} m_{\mathrm{b}}} \prod_{k \in \{\mathrm{r,b}\}} \kappa_{j' m_{\mathrm{b}}}^{k} e^{- \alpha_{j'} + \beta_{j'}} + \left( 1 - \varpi_{j_{\mathrm{re}} m_{\mathrm{b}}} \right) \left( 1 - \varpi_{j_{\mathrm{im}} m_{\mathrm{b}}} \right) \prod_{k \in \{\mathrm{r,b}\}} \left( 1 - \kappa_{j' m_{\mathrm{b}}}^{k} \right) e^{\alpha_{j'} - \beta_{j'}}}{\left( e^{\beta_{j'}} + e^{- \beta_{j'}} \right) \left( \varpi_{j_{\mathrm{re}} m_{\mathrm{b}}} \varpi_{j_{\mathrm{im}} m_{\mathrm{b}}} \prod_{k \in \{\mathrm{r,b}\}} \kappa_{j' m_{\mathrm{b}}}^{k} e^{- \alpha_{j'}} + \left( 1 - \varpi_{j_{\mathrm{re}} m_{\mathrm{b}}} \right) \left( 1 - \varpi_{j_{\mathrm{im}} m_{\mathrm{b}}} \right) \prod_{k \in \{\mathrm{r,b}\}} \left( 1 - \kappa_{j' m_{\mathrm{b}}}^{k} \right) e^{\alpha_{j'}} \right)}
\end{align}
with $ m_{\mathrm{r}} = M_{\mathrm{v}} + 1 $ and $ m_{\mathrm{b}} = 2 $.
The backward message from $ b_{j' 1} $ to $ g_{j 1} $ is represented as
\begin{align}
	\nu_{b_{j' 1} \to g_{j 1}} = \rho_{j 1} \delta( b_{j' 1} - 1 ) + ( 1 - \rho_{j 1} ) \delta( b_{j' 1} + 1 )
\end{align}
with
\begin{align}
	\rho_{j 1} = \dfrac{\varpi_{q 1} \prod_{d \in \{\mathrm{r,b}\}} \kappa_{j' 1}^{d} e^{- \alpha_{j'}}}{\varpi_{q 1} \prod_{d \in \{\mathrm{r,b}\}} \kappa_{j' 1}^{d} e^{- \alpha_{j'}} + ( 1 - \varpi_{q 1} ) \prod_{d \in \{\mathrm{r,b}\}} ( 1- \kappa_{j' 1}^{d} ) e^{\alpha_{j'}}}.
\end{align}

Apart from factor node $ \eta_{j' 2}^{\alpha} $ and the two coupled factor nodes $ g_{j_{\mathrm{re}} 2} $ and $ g_{j_{\mathrm{im}} 2} $, variable node $ b_{j' 2} $ at the edge of the MRF structure receive messages from factor nodes $ \eta_{j', 2, M_{\mathrm{v}} + 2} $, $ \eta_{j', 1, 2} $ and $ \eta_{j', 2, 3} $ in three directions:
\begin{align}
	\nu_{j' 2}^{d} &= \kappa_{j' 2}^{d} \delta ( b_{j' 2} - 1 ) + \left( 1 - \kappa_{j' 2}^{d} \right) \delta ( b_{j' 2} + 1 )
\end{align}
where $ d \in \{\mathrm{t}, \mathrm{r}, \mathrm{b}\} $, and
\begin{align}
	\kappa_{j' 2}^{\mathrm{t}} &= \tfrac{\varpi_{j_{\mathrm{re}} m_{\mathrm{t}} } 	\varpi_{j_{\mathrm{im}} m_{\mathrm{t}}} \kappa_{j' m_{\mathrm{t}}}^{\mathrm{r}} e^{- \alpha_{j'} + \beta_{j'}} + \left( 1 - \varpi_{j_{\mathrm{re}} m_{\mathrm{t}}} \right) \left( 1 - \varpi_{j_{\mathrm{im}} m_{\mathrm{t}}} \right) \left( 1 - \kappa_{j' m_{\mathrm{t}}}^{\mathrm{r}} \right) e^{\alpha_{j'} - \beta_{j'}}}{\left( e^{\beta_{j'}} + e^{- \beta_{j'}} \right) \left( \varpi_{j_{\mathrm{re}} m_{\mathrm{t}}} \varpi_{j_{\mathrm{im}} m_{\mathrm{t}}} \kappa_{j' m_{\mathrm{t}}}^{\mathrm{r}} e^{- \alpha_{j'}} + \left( 1 - \varpi_{j_{\mathrm{re}} m_{\mathrm{t}}} \right) \left( 1 - \varpi_{j_{\mathrm{im}} m_{\mathrm{t}}} \right) \left( 1 - \kappa_{j' m_{\mathrm{t}}}^{\mathrm{r}} \right) e^{\alpha_{j'}} \right)} \\
	\kappa_{j' 2}^{\mathrm{r}} &= \tfrac{\varpi_{j_{\mathrm{re}} m_{\mathrm{r}} } 	\varpi_{j_{\mathrm{im}} m_{\mathrm{r}}} \prod_{k \in \{\mathrm{r,t,b}\}} \kappa_{j' m_{\mathrm{r}}}^{k} e^{- \alpha_{j'} + \beta_{j'}} + \left( 1 - \varpi_{j_{\mathrm{re}} m_{\mathrm{r}}} \right) \left( 1 - \varpi_{j_{\mathrm{im}} m_{\mathrm{r}}} \right) \prod_{k \in \{\mathrm{r,t,b}\}} \left( 1 - \kappa_{j' m_{\mathrm{r}}}^{k} \right) e^{\alpha_{j'} - \beta_{j'}}}{\left( e^{\beta_{j'}} + e^{- \beta_{j'}} \right) \left( \varpi_{j_{\mathrm{re}} m_{\mathrm{r}}} \varpi_{j_{\mathrm{im}} m_{\mathrm{r}}} \prod_{k \in \{\mathrm{r,t,b}\}} \kappa_{j' m_{\mathrm{r}}}^{k} e^{- \alpha_{j'}} + \left( 1 - \varpi_{j_{\mathrm{re}} m_{\mathrm{r}}} \right) \left( 1 - \varpi_{j_{\mathrm{im}} m_{\mathrm{r}}} \right) \prod_{k \in \{\mathrm{r,t,b}\}} \left( 1 - \kappa_{j' m_{\mathrm{r}}}^{k} \right) e^{\alpha_{j'}} \right)} \\
	\kappa_{j' 2}^{\mathrm{b}} &= \tfrac{\varpi_{j_{\mathrm{re}} m_{\mathrm{b}} } 	\varpi_{j_{\mathrm{im}} m_{\mathrm{b}}} \prod_{k \in \{\mathrm{r,b}\}} \kappa_{j' m_{\mathrm{b}}}^{k} e^{- \alpha_{j'} + \beta_{j'}} + \left( 1 - \varpi_{j_{\mathrm{re}} m_{\mathrm{b}}} \right) \left( 1 - \varpi_{j_{\mathrm{im}} m_{\mathrm{b}}} \right) \prod_{k \in \{\mathrm{r,b}\}} \left( 1 - \kappa_{j' m_{\mathrm{b}}}^{k} \right) e^{\alpha_{j'} - \beta_{j'}}}{\left( e^{\beta_{j'}} + e^{- \beta_{j'}} \right) \left( \varpi_{j_{\mathrm{re}} m_{\mathrm{b}}} \varpi_{j_{\mathrm{im}} m_{\mathrm{b}}} \prod_{k \in \{\mathrm{r,b}\}} \kappa_{j' m_{\mathrm{b}}}^{k} e^{- \alpha_{j'}} + \left( 1 - \varpi_{j_{\mathrm{re}} m_{\mathrm{b}}} \right) \left( 1 - \varpi_{j_{\mathrm{im}} m_{\mathrm{b}}} \right) \prod_{k \in \{\mathrm{r,b}\}} \left( 1 - \kappa_{j' m_{\mathrm{b}}}^{k} \right) e^{\alpha_{j'}} \right)}
\end{align}
with $ m_{\mathrm{r}} = M_{\mathrm{v}} + 2 $, $ m_{\mathrm{t}} = 1 $, and $ m_{\mathrm{b}} = 3 $.
The backward message from $ b_{j' 2} $ to $ g_{j 2} $ is represented as
\begin{align}
	\nu_{b_{j' 2} \to g_{j 2}} = \rho_{j 2} \delta( b_{j' 2} - 1 ) + ( 1 - \rho_{j 2} ) \delta( b_{j' 2} + 1 )
\end{align}
with
\begin{align}
	\rho_{j 2} = \dfrac{\varpi_{q 2} \prod_{d \in \{\mathrm{r,t,b}\}} \kappa_{j' 2}^{d} e^{- 	\alpha_{j'}}}{\varpi_{q 2} \prod_{d \in \{\mathrm{r,t,b}\}} \kappa_{j' 2}^{d} e^{- \alpha_{j'}} + ( 1 - \varpi_{q 2} ) \prod_{d \in \{\mathrm{r,t,b}\}} ( 1- \kappa_{j' 2}^{d} ) e^{\alpha_{j'}}}.
\end{align}

\subsubsection{Derivations of $ \widehat{x}_{j m} $ and $ \mu_{j m}^{x} $}

Consider the marginal posterior \eqref{marg}, the integral items $ \int_{x} x \mathcal{N} ( x ; \widehat{r}, \mu^{r} ) \nu_{g \to x} $ and $ \int_{x} x^{2} \mathcal{N} ( x ; \widehat{r}, \mu^{r} ) \nu_{g \to x} $ can be calculated as
\begin{align}
	\label{57}
	&\int_{x} x \mathcal{N} ( x ; \widehat{r}, \mu^{r} ) \nu_{g \to x} \notag \\ &= \dfrac{\lambda}{2} \exp \left( \dfrac{1}{2} \lambda^2 \mu^{r} + \lambda \widehat{r} \right) \int_{- \infty}^{0} x \mathcal{N} ( x ; \widehat{r}^{-}, \mu^{r} ) \mathrm{d} x  + \dfrac{\lambda}{2} \exp \left( \dfrac{1}{2} \lambda^2 \mu^{r} - \lambda \widehat{r} \right) \int_{0}^{\infty} x \mathcal{N} ( x ; \widehat{r}^{+}, \mu^{r} ) \mathrm{d} x \notag \\
	&= \tfrac{I_{x}^{-}}{\Phi_{\mathcal{N}} \left( \tfrac{-\widehat{r}^{-}}{\sqrt{\mu^{r}}} \right)} \cdot \dfrac{1}{\sqrt{2 \pi}} \int_{- \infty}^{\frac{-\widehat{r}^{-}}{\sqrt{\mu^{r}}}} ( \sqrt{\mu^{r}} t + \widehat{r}^{-} ) e^{- \frac{t^2}{2}} \mathrm{d} t
	+ \tfrac{I_{x}^{+}}{\Phi_{\mathcal{N}} \left( \tfrac{\widehat{r}^{+}}{\sqrt{\mu^{r}}} \right)} \cdot \dfrac{1}{\sqrt{2 \pi}} \int_{- \infty}^{\frac{\widehat{r}^{+}}{\sqrt{\mu^{r}}}} ( - \sqrt{\mu^{r}} t + \widehat{r}^{+} ) e^{- \frac{t^2}{2}} \mathrm{d} t \notag \\
	&= \rho I_{x}^{-} \left[ \widehat{r}^{-} - \mu^{r} \frac{\mathcal{N} ( 0 ; \widehat{r}^{-}, \mu^{r} )}{\Phi_{\mathcal{N}} \left( -\widehat{r}^{-} / \sqrt{\mu^{r}} \right)} \right] + \rho I_{x}^{+} \left[ \widehat{r}^{+} + \mu^{r} \frac{\mathcal{N} ( 0 ; \widehat{r}^{+}, \mu^{r} )}{\Phi_{\mathcal{N}} \left( \widehat{r}^{+} / \sqrt{\mu^{r}} \right)} \right] \\
	\label{58}
	&\int_{x} x^{2} \mathcal{N} ( x ; \widehat{r}, \mu^{r} ) \nu_{g \to x} \notag \\
	&= \dfrac{\rho \lambda}{2} \exp \left( \dfrac{1}{2} \lambda^2 \mu^{r} + \lambda \widehat{r} \right) \int_{- \infty}^{0} x^{2} \mathcal{N} ( x ; \widehat{r}^{-}, \mu^{r} ) \mathrm{d} x + \dfrac{\rho \lambda}{2} \exp \left( \dfrac{1}{2} \lambda^2 \mu^{r} - \lambda \widehat{r} \right) \int_{0}^{\infty} x^{2} \mathcal{N} ( x ; \widehat{r}^{+}, \mu^{r} ) \mathrm{d} x \notag \\
	&= \tfrac{\rho I_{x}^{-}}{\Phi_{\mathcal{N}} \left( \tfrac{-\widehat{r}^{-}}{\sqrt{\mu^{r}}} \right)} \cdot \dfrac{1}{\sqrt{2 \pi}} \int_{- \infty}^{\frac{-\widehat{r}^{-}}{\sqrt{\mu^{r}}}} ( \sqrt{\mu^{r}} t + \widehat{r}^{-} )^2 e^{- \frac{t^2}{2}} \mathrm{d} t + \tfrac{\rho I_{x}^{+}}{\Phi_{\mathcal{N}} \left( \tfrac{\widehat{r}^{+}}{\sqrt{\mu^{r}}} \right)} \cdot \dfrac{1}{\sqrt{2 \pi}} \int_{- \infty}^{\frac{\widehat{r}^{+}}{\sqrt{\mu^{r}}}} ( - \sqrt{\mu^{r}} t + \widehat{r}^{+} )^2 e^{- \frac{t^2}{2}} \mathrm{d} t \notag \\
	&= \rho I_{x}^{-} \left[ ( \widehat{r}^{-} )^{2} + \mu^{r} - \frac{\widehat{r}^{-} \mu^{r} \mathcal{N} ( 0 ; \widehat{r}^{-}, \mu^{r} )}{\Phi_{\mathcal{N}} \left( -\widehat{r}^{-} / \sqrt{\mu^{r}} \right)} \right] + \rho I_{x}^{+} \left[ ( \widehat{r}^{+} )^{2} + \mu^{r} + \frac{\widehat{r}^{+} \mu^{r} \mathcal{N} ( 0 ; \widehat{r}^{+}, \mu^{r} )}{\Phi_{\mathcal{N}} \left( \widehat{r}^{+} / \sqrt{\mu^{r}} \right)} \right].
\end{align}
Combining the results of \eqref{57}, \eqref{58}, and the normalization constant $ I_x $ \eqref{Ix}, the mean and variance of $ p( x_{j m} | \mathbf{Y} ) $ can be easily achieved.

\section{Proof of Corollary 1}

PUPE is equivalent to the probability $ p(\left\| \widehat{\boldsymbol{x}} \right\|^{2} > \upsilon | \boldsymbol{x} = \mathbf{0}) $ \cite{CSY18}, i.e,
\begin{align}
	\mathrm{PUPE} = \int_{\left\| \widehat{\boldsymbol{x}} \right\|^{2} > \upsilon} p(\widehat{\boldsymbol{x}} | \boldsymbol{x} = \mathbf{0}) d\widehat{\boldsymbol{x}} = \int_{\left\| \widehat{\boldsymbol{x}} \right\|^{2} > \upsilon} \dfrac{\exp \left( - \left\| \widehat{\boldsymbol{x}} \right\|^{2} \varrho^{-2} \right)}{\pi^{M} \varrho^{2M}}  d\widehat{\boldsymbol{x}} \overset{(a)}{=} \dfrac{\overline{\Gamma}(M, \upsilon \varrho^{-2})}{\Gamma(M)}
\end{align}
where $ (a) $ is obtained by treating the integral of $ \widehat{\boldsymbol{x}} $ as the cumulative distribution function of a $ \chi^{2} $ distribution with $ 2M $ degrees of freedom.

Now with $ \upsilon = c M \varrho^{2} $, we have \cite{G11}
\begin{align}
	\lim\limits_{M \to \infty} \dfrac{\overline{\Gamma}(M, \upsilon \varrho^{-2})}{\Gamma(M)} = \dfrac{1}{2} \operatorname{erfc} \left( C \sqrt{\tfrac{1}{2} M} \right) + \dfrac{\exp (-\frac{1}{2} M C^{2})}{\sqrt{2 \pi M}} \sum_{i=0}^{\infty} \dfrac{\mathcal{C}_{i}(M)}{M^{i}}
\end{align}
where $ C = \sqrt{2 (c - 1 - \log c)} > 0 $ for $ c > 1 $, and $ \mathcal{C}_{0}(M) = 1 $, $ \mathcal{C}_{1}(M) = 0 $, $ \mathcal{C}_{i}(M) = \mathcal{C}_{i-2}(M) + L_{i}^{(1-M-i)}(1-M) $ with $ L_{i}^{M} $ the Laguerre polynomials. It is known that the complementary error function
\begin{align}
	\operatorname{erfc}(x) = \dfrac{\exp (-x^{2})}{\sqrt{\pi} x} \left( 1 + o(x^{-2}) \right).
\end{align}
Therefore, we have
\begin{align}
	\lim\limits_{M \to \infty} \dfrac{\overline{\Gamma}(M, \upsilon \varrho^{-2})}{\Gamma(M)} = \lim\limits_{M \to \infty} \left[ \dfrac{\exp (-\frac{1}{2} M C^{2} )}{C \sqrt{\pi M / 2}} \left( 1 + o \left( \dfrac{1}{M} \right) \right) + o\left( \dfrac{\exp (-M)}{\sqrt{M}} \right) \right] = 0.
\end{align}
Corollary 1 is thus proved.

\end{appendices}

\bibliographystyle{IEEEtran}
\bibliography{reference}

\ifCLASSOPTIONcaptionsoff
\newpage
\fi

\end{document}